\definecolor{blue}{rgb}{0.211,0.211,0.856}
\definecolor{red}{rgb}{0.856,0.2,0.2}
\newcommand{\blue}{\color{blue}}
\newcommand{\red}{\color{red}}
\definecolor{darkgreen}{rgb}{0.1,0.656,0.1}
\newcommand{\darkgreen}{\color{darkgreen}}
\definecolor{green}{rgb}{0.1,0.656,0.1}
\definecolor{lgreen}{rgb}{0.411,1.0,0.411}
\definecolor{lred}{rgb}{1.0,0.711,0.501}
\definecolor{lgrey}{rgb}{0.5,0.5,0.5}
\definecolor{llgrey}{rgb}{0.68,0.68,0.68}
\newenvironment{SE}{\noindent\color{red} SE : }{}
\newenvironment{aw}{\noindent\color{magenta} AW :  }{}
\newcommand{\QuickMergeXsort}{{QuickMergeXsort}\xspace}
\newcommand{\QuickMergesort}{{QuickMergesort}\xspace}
\newcommand{\QuickXsort}{{QuickXsort}\xspace}
\newcommand{\Oh}{\mathcal{O}}
\newcommand{\floor}[1]{\left\lfloor\mathinner{#1} \right\rfloor}
\newcommand{\ceil}[1]{\left\lceil\mathinner{#1} \right\rceil}
\renewcommand{\Pr}[1]{\mathop{\mathrm{Pr}}\left[\,#1\,\right]}
  \title{QuickMergesort: Practically Efficient Constant-Factor Optimal Sorting}     
\titlerunning{QuickMergesort}
     \author[1]{Stefan Edelkamp}
     \author[2]{ Armin Wei\ss}
     \affil[1]{King's College London, UK\\
     	\texttt{stefan.edelkamp@kcl.ac.uk}}
\affil[2]{ Universit{\"a}t Stuttgart, Germany\\
	%  Universit{\"a}tsstra{\ss}e 38 \\
	%  D-70569 Stuttgart, Germany \\[5mm]\\
	\texttt{armin.weiss@fmi.uni-stuttgart.de}}
     \authorrunning{S. Edelkamp and A. Wei\ss} %mandatory. First: Use abbreviated first/middle names. Second (only in severe cases): Use first author plus 'et. al.'
     \subjclass{F.2.2 Nonnumerical Algorithms and Problems}% mandatory: Please choose ACM 1998 classifications from http://www.acm.org/about/class/ccs98-html . E.g., cite as "F.1.1 Models of Computation". 
\begin{document}
     
\date{}

\maketitle
\thispagestyle{empty}

\definecolor{mygreen}{rgb}{0,0.6,0}
\definecolor{mygray}{rgb}{0.5,0.5,0.5}
\definecolor{mymauve}{rgb}{0.58,0,0.82}

\begin{abstract}
 We consider the fundamental problem of internally sorting a sequence of $n$ elements.  In its best theoretical setting QuickMergesort,
 a combination Quicksort with Mergesort with a Median-of-$\sqrt{n}$ pivot selection, requires at most $n \log n - 1.3999n + o(n)$ element comparisons on the average. The questions addressed in this paper is how to make this algorithm practical. As refined pivot selection usually adds much overhead, we show that the Median-of-3 pivot selection of QuickMergesort leads to at most $n \log n - 0{.}75n + o(n)$ element comparisons on average, while running fast on elementary data. 
 The experiments show that QuickMergesort outperforms state-of-the-art library implementations, including {C++}'s Introsort and Java's Dual-Pivot Quicksort. Further trade-offs between a low running time and a low number of comparisons are studied. Moreover, we describe a practically efficient version with $n \log n + \Oh(n)$ comparisons in the worst case.

\keywords{in-place sorting, quicksort, mergesort, analysis of algorithms}
\end{abstract}

\pagestyle{plain}
\section{Introduction}

Sorting a sequence of $n$ elements remains one of the most fascinating
topics in computer science, and runtime improvements to sorting has
significant impact for many applications.  The lower bound is $\log
(n!) \approx n \log n - 1.44n + \Theta(\log n)$ element comparisons
applies to the worst and the average case\footnote{Logarithms denoted by
$\log$ are base 2, and the term {average case} refers to a
  uniform distribution of all input permutations assuming all elements
  are different.}.

The sorting algorithms we propose in this paper are \emph{internal} or
\emph{in-place}: they need at most $\Oh(\log n)$ space (computer
words) in addition to the array to be sorted. That means we consider
{Quicksort}~\cite{Hoa62} an internal algorithm, whereas standard
{Mergesort} is \emph{external} because it needs a linear amount of
extra space.

Based on {QuickHeapsort} \cite{quickheap,DiekertW13Quick}, Edelkamp
and Wei\ss~\cite{EWCSR} developed the concept of {QuickXsort} and
applied it to X = {WeakHeapsort}~\cite{Dut93} and X = {Mergesort}.
The idea~-- going back to {UltimateHeapsort}~\cite{ultimate}~-- is
very simple: as in {Quicksort} the array is partitioned into the
elements greater and less than some pivot element, respectively. Then
one part of the array is sorted by X and the other part is sorted
recursively. The advantage is that, if X is an external algorithm,
then in {QuickXsort} the part of the array which is not currently
being sorted may be used as temporary space, which yields an internal
variant of X.

Using {Mergesort} as X, a partitioning scheme with 
$\lceil \sqrt{n} \rceil$ pivots, known to be
optimal for classical {Quicksort}~\cite{MartinezR01}, and Ford and
Johnson's {MergeInsertion} as the base case~\cite{FordJ59},
{QuickMergesort} requires at most $n \log n - 1.3999n + o(n)$ element
comparisons on the average (and $n \log n - 1.4n + o(n)$ for $n=2^k$),
while preserving worst-case bounds $n \log n + \Oh(n)$ element
comparisons and $\Oh(n \log n)$ time for all other
operations~\cite{EWCSR}. To the authors' knowledge the average-case
result is the best-known upper bound for sequential sorting with
$\Oh(n \log n)$ overall time bound, in the leading term matching, and
in the linear term being less than $0.045n$ away from the lower bound.
The research question addressed in this paper, is whether 
{QuickMergesort} can be made \emph{practical} in relation to
efficient library implementations for sorting, such as {Introsort}
and {Dual-Pivot Quicksort}. 

{Introsort}~\cite{Mus97}, implemented as \texttt{std::sort} in C++/STL, 
is a mix of {Insertionsort},
{CleverQuicksort} (the Median-of-3 variant of {Quicksort})
and {Heapsort}~\cite{Flo64,Wil64}, where the former and latter are
used as recursion stoppers (the one for improving the performance
for small sets of data, the other one
for improving worst-case performance). The average-time complexity,
however, is dominated by {CleverQuicksort}.
 %Assisted by its
%template mechanism, {Introsort} is much faster than the
%corresponding {Quicksort} inplementation (\texttt{qsort}) 
%in the Standard Library of C. 

{Dual-Pivot Quicksort}\footnote{Oracle states:
\emph{The sorting algorithm is a Dual-Pivot Quicksort by Vladimir 
Yaroslavskiy, Jon Bentley, and Joshua Bloch}; see
  http://permalink.gmane.org/gmane.comp.java.openjdk.core-libs.devel/2628}
by Yaroslavskiy et al. as implemented in current versions of
Java (e.g., Oracle Java 7 and Java 8) is an interesting 
{Quicksort} variant using two (instead of one) pivot elements in the
partitioning stage (recent proposals
use three and more pivots~\cite{KushagraLQM14}).  It has been shown that
-- in contrast to ordinary {Quicksort} with an average case of
$2\cdot n \ln n +\Oh(n)$ element comparisons -- {Dual-Pivot Quicksort}
requires at most $1.9 \cdot n \ln n +\Oh(n)$ element comparisons on the
average, and there are variants that give $1.8
\cdot n \ln n +\Oh(n)$. For a rising number of samples for 
pivot selection, the leading factor 
decreases~\cite{dqsanalysis1,dqsanalysis2,dqsanalysis3}.

So far there is no \emph{practical} (competitive in
  performance to state-of-the-art library implementations)
sorting algorithm that is \emph{internal} and
\emph{constant-factor-optimal} (optimal in the leading term).
Maybe closest is InSituMergesort~\cite{KatajainenPT96,ElmasryKS12}, but even
though that algorithm improves greatly over the library
implementation of in-place stable sort in STL, it could not match with
other internal sorting algorithms.
Hence, the aim of the paper is to design fast
{QuickMergesort} variants. 
Instead of using a Median-of-$\sqrt{n}$ strategy, we will
use the Median-of-3. For Quicksort, the Median-of-3 strategy is also known as {CleverQuicksort}. 
The leading constant in $c \cdot n \log n +\Oh(n)$ for the average
case of comparisons of {CleverQuicksort} is $c = (12/7)\ln 2
\approx 1.188$. %~\cite{EdelkampThesis}.
As $c < 1.8 \ln 2$, 
{CleverQuicksort} is theoretically superior to the wider class of
        {DualPivotQuicksort} algorithms considered
        in~\cite{dqsanalysis1,dqsanalysis3,dqsanalysis2}.

Another sorting algorithm studied in this paper is a mix of 
{QuickMergesort} and {CleverQuicksort}:  
during the sorting with {Mergesort}, for small arrays {CleverQuicksort} is applied.

The contributions of the paper are as follows.
\begin{enumerate}
	\item We derive a bound on the average number of comparisons in {QuickMergesort}
	when the Median-of-3 partitioning strategy is used instead of the
	Median-of-$\sqrt{n}$ strategy, and show a surprisingly low upper bound of
	$n \log n - 0.75n + o(n)$ comparisons on average. 
	\item We analyze a variant of {QuickMergesort} where base
          cases of size at most $n^\beta$ for some $\beta \in [0,1]$
          are sorted using yet another sorting algorithm X; otherwise the
          algorithm is identical to {QuickMergesort}. We show that
          if X is called for about $\sqrt{n}$ elements and X uses at most $\alpha \cdot n \log n+
          \Oh(n)$ comparisons on average, the average
          number of comparisons of is $(1+\alpha)/2 \cdot n \log n+ \Oh(n)$, with
          $(1+\alpha)/2 \approx 1.094$ for X $=$ Median-of-3
             {Quicksort}. Other element size thresholds for invoking X lead to
             further trade-offs.
    \item We refine a trick suggested in \cite{EWCSR} in order to obtain a bound of  $n \log n + 16.1n$ comparisons in the worst case using the median-of-median algorithm \cite{BFPRT73} with an adaptive pivot sampling strategy.
    On average the modified algorithm is only slightly slower than the Median-of-3 variant of QuickMergesort. 
	\item We compare the proposals empirically to other algorithms from
	the literature.  
\end{enumerate}
We start with revisiting {QuickXsort}
and especially {QuickMergesort}, including theoretically important
and practically relevant sub-cases. We derive an upper bound on
the average number of comparisons in {QuickMergesort} with
Median-of-3 pivot selection. In \prettyref {sec:QMQS}, we present changes to the
algorithm that lead to the hybrid {QuickMergeXsort}. Next, we introduce the worst-case efficient variant MoMQuickMergesort, and, finally, we present experimental results.

\section{ \QuickXsort\ and \QuickMergesort }\label{sec:quickXsort}

In this section we give a brief description of {QuickXsort} and
extend a result concerning the number of comparisons performed in the
average case.  

Let X be some sorting algorithm.  {QuickXsort} works as follows:
First, choose a pivot element as the median of some sample
(the performance will depend on the size of the sample). Next, partition
the array according to this pivot element, i.\,e., rearrange the array
such that all elements left of the pivot are less or equal and all
elements on the right are greater or equal than the pivot element.  Then, choose one part of the array and sort
it with the algorithm X. After that, sort the other part of the array
recursively with {QuickXsort}.

The main advantage of this procedure is that the part of the array
that is not being sorted currently can be used as temporary memory for
the algorithm X. This yields fast \emph{internal} variants for various
\emph{external} sorting algorithms such as {Mergesort}. The idea is
that whenever a data element should be moved to the extra (additional
or external) element space, instead it is swapped with the data
element occupying the respective position in part of the array which
is used as temporary memory.  Of course, this works only if the
algorithm needs additional storage only for data
elements. Furthermore, the algorithm has to keep track of the
positions of elements which have been swapped. 

For the number of comparisons some general results hold for a wide
class of algorithms X.  Under natural assumptions the average number
of comparisons of X and of \QuickXsort{} differ only by an
$o(n)$-term:
Let X be some sorting algorithm requiring at most $n \log n + cn
+o(n)$ comparisons on average. Then, \QuickXsort{} with a
Median-of-$\sqrt{n}$ pivot selection also needs at most $n \log n + cn
+o(n)$ comparisons on average~\cite{EWCSR}. Sample sizes of approximately $\sqrt{n}$ are likely to be optimal 
\cite{DiekertW13Quick,MartinezR01}.

If the unlikely case happens that always the $\sqrt{n}$
smallest elements are chosen for pivot selection, $\Omega(n^{3/2})$
comparisons are performed. However, as we showed in \cite{EWCSR}, such a worst case is unlikely. Nevertheless, for improving the worst-case complexity, in \cite{EWCSR} we suggested a trick similar to {Introsort} \cite{Mus97} leading to  $n \log n + \Oh(n)$ comparisons in the worst case (use the median of the whole array as pivot if the previous pivot was very bad). In \prettyref{sec:worstcase} of this paper, we refine this method yielding a better average and worst-case performance.

One example for \QuickXsort{} is {QuickMergesort}. For the {Mergesort}
part we use standard (top-down) {Mergesort}, which can be implemented
using $m$ extra element spaces to merge two arrays of length $m$.
After the partitioning, one part of the array -- for a simpler description we assume the first
part -- has to be sorted with {Mergesort} (note, however, that any of the two sides can be sorted with Mergesort as long as the other side contains at least $n/3$ elements. 
In order to do so, the
second half of this first part is sorted recursively with {Mergesort}
while moving the elements to the back of the whole array. The elements
from the back of the array are inserted as dummy elements into the
first part. Then, the first half of the first part is sorted recursively
with {Mergesort} while being moved to the position of the former
second half of the first part. Now, at the front of the array, there is enough space
(filled with dummy elements) such that the two halves can be merged.
The executed stages of the algorithm {QuickMergesort}
(with no median pivot selection strategy applied)  
are illustrated in Fig~\ref{fig:sample}.
\begin{figure}[t]
\begin{center} \hspace{0.12cm}
\begin{tikzpicture}[scale=.9]
{
\draw(-2,4) -- (6.4,4);
\draw(-2,4.7) -- (6.4,4.7);
\draw(-2,4) -- (-2,4.7);
\draw(-1.3,4) -- (-1.3,4.7);
\node[] (2) at (-0.95,4.35) {$11$};
\draw(-0.6,4) -- (-0.6,4.7);
\node[] (3) at (-0.25,4.35) {$4$};
\draw(0.1,4) -- (0.1,4.7);
\node[] (4) at (.45,4.35) {$5$};
\draw(.8,4) -- (.8,4.7);
\node[] (5) at (1.15,4.35) {$6$};
\draw(1.5,4) -- (1.5,4.7);
\node[] (6) at (1.85,4.35) {$10$};
\draw(2.2,4) -- (2.2,4.7);
\node[] (7) at (2.55,4.35) {$9$};
\draw(2.9,4) -- (2.9,4.7);
\node[] (8) at (3.25,4.35) {$2$};
\draw(3.6,4) -- (3.6,4.7);
\node[] (9) at (3.95,4.35) {$3$};
\draw(4.3,4) -- (4.3,4.7);
\node[] (10) at (4.65,4.35) {$1$};
\draw(5.0,4) -- (5.0,4.7);
\node[] (11) at (5.35,4.35) {$0$};
\draw(5.7,4) -- (5.7,4.7);
\node[] (12) at (6.05,4.35) {$8$};
\draw(6.4,4) -- (6.4,4.7);
\node[] (1) at (-1.65,4.35) {${\red 7}$};
}
\end{tikzpicture}

partitioning leads to 

\vspace{0.1cm}
\begin{tikzpicture}[scale=.9]
{
\draw(-2,4) -- (6.4,4);
\draw(-2,4.7) -- (6.4,4.7);
\draw(-2,4) -- (-2,4.7);
\node[] (1) at (-1.65,4.35) {$3$};
\draw(-1.3,4) -- (-1.3,4.7);
\node[] (2) at (-0.95,4.35) {$2$};
\draw(-0.6,4) -- (-0.6,4.7);
\node[] (3) at (-0.25,4.35) {$4$};
\draw(0.1,4) -- (0.1,4.7);
\node[] (4) at (.45,4.35)   {$5$};
\draw(.8,4) -- (.8,4.7);
\node[] (5) at (1.15,4.35)  {$6$};
\draw(1.5,4) -- (1.5,4.7);
\node[] (6) at (1.85,4.35)  {$0$};
\draw(2.2,4) -- (2.2,4.7);
\node[] (7) at (2.55,4.35)  {$1$};
\draw(2.9,4) -- (2.9,4.7);
\draw[thick](3.6,4) -- (3.6,4.7);
\node[] (9) at (3.95,4.35)  {$9$};
\draw(4.3,4) -- (4.3,4.7);
\node[] (10) at (4.65,4.35) {$10$};
\draw(5.0,4) -- (5.0,4.7);
\node[] (11) at (5.35,4.35) {$11$};
\draw(5.7,4) -- (5.7,4.7);
\node[] (12) at (6.05,4.35)  {$8$};
\draw(6.4,4) -- (6.4,4.7);
\node[] (8) at (3.25,4.35) {${\red 7}$};
}
%\end{tikzpicture}
%\end{center}

%\begin{center}
%\begin{tikzpicture}[scale=.9]
%\onslide<4->{\node[] (8) at (3.25,4.35) {$7$};}

%\onslide<4>{\node[] (1) at (.45,3.85) {$\underbrace{\hspace{44.1mm}}$};
%\node[] (1) at (.45,3.45) {\small sort with Mergesort};
%\draw[thick] (2.9,4) -- (2.9,4.7);}

{
\node[]  at (1.5,3.85) {$\underbrace{\hspace{25.2mm}}$};
\node[] (ubxx) at (1.5,3.85) {};
\node[]  at (0.3,3.5) {\small sort recursively};
\node[]  at (3.1,3.5) {\small with Mergesort};

\draw[thick] (.8,4) -- (.8,4.7);
\draw[thick] (2.9,4) -- (2.9,4.7);
%\node[] (1) at (.45,3.05) {\small then merge};
}

\begin{scope}[shift={(0,-1.7)}]
{
\draw(-2,4) -- (6.4,4);
\draw(-2,4.7) -- (6.4,4.7);
\draw(-2,4) -- (-2,4.7);
\node[] (1) at (-1.65,4.35) {$3$};
\draw(-1.3,4) -- (-1.3,4.7);
\node[] (2) at (-0.95,4.35) {$2$};
\draw(-0.6,4) -- (-0.6,4.7);
\node[] (3) at (-0.25,4.35) {$4$};
\draw(0.1,4) -- (0.1,4.7);
\node[] (4) at (.45,4.35) {$11$};
\draw(.8,4) -- (.8,4.7);
\node[] (5) at (1.15,4.35) {$9$};
\draw(1.5,4) -- (1.5,4.7);
\node[] (6) at (1.85,4.35) {$10$};
\draw(2.2,4) -- (2.2,4.7);
\node[] (7) at (2.55,4.35) {$8$};
\draw(2.9,4) -- (2.9,4.7);
\node[] (1) at (3.25,4.35) {\red $7$};
\draw[thick](3.6,4) -- (3.6,4.7);
\node[] (9) at (3.95,4.35)  {\blue $0$};
\draw(4.3,4) -- (4.3,4.7);
\node[] (10) at (4.65,4.35) {\blue $1$};
\draw(5.0,4) -- (5.0,4.7);
\node[] (11) at (5.35,4.35) {\blue $5$};
\draw(5.7,4) -- (5.7,4.7);
\node[] (12) at (6.05,4.35) {\blue $6$};
\draw(6.4,4) -- (6.4,4.7);
}
{  
\draw[->] (ubxx) ..controls +(0,-0.9) and (5,5.1).. (5.0,4.7);}
{
  \node[] (ubxxx) at (-0.95,3.85) {};
\node[] (1) at (2,3.48) {\small sort recursively with Mergesort};
\draw[thick] (.8,4) -- (.8,4.7);
\draw[thick] (2.9,4) -- (2.9,4.7);
%\node[] (1) at (.45,3.05) {\small then merge};
\node[] (1) at (-0.95,3.85) {$\underbrace{\hspace{18.9mm}}$};}
\end{scope}

\begin{scope}[shift={(0,-3.4)}]

{
\draw(-2,4) -- (6.4,4);
\draw(-2,4.7) -- (6.4,4.7);
\draw(-2,4) -- (-2,4.7);
\node[] (1) at (-1.65,4.35) { $9$};
\draw(-1.3,4) -- (-1.3,4.7);
\node[] (2) at (-0.95,4.35) { $10$};
\draw(-0.6,4) -- (-0.6,4.7);
\node[] (3) at (-0.25,4.35) { $8$};
\draw(0.1,4) -- (0.1,4.7);
\node[] (4) at (.45,4.35) {$11$};
\draw(.8,4) -- (.8,4.7);
\node[] (5) at (1.15,4.35) {\darkgreen$2$};
\draw(1.5,4) -- (1.5,4.7);
\node[] (6) at (1.85,4.35) {\darkgreen$3$};
\draw(2.2,4) -- (2.2,4.7);
\node[] (7) at (2.55,4.35) {\darkgreen $4$};
\draw(2.9,4) -- (2.9,4.7);
\node[] (1) at (3.25,4.35) {\red $7$};
\draw[thick](3.6,4) -- (3.6,4.7);
\node[] (9) at (3.95,4.35)  {\blue $0$};
\draw(4.3,4) -- (4.3,4.7);
\node[] (10) at (4.65,4.35) {\blue $1$};
\draw(5.0,4) -- (5.0,4.7);
\node[] (11) at (5.35,4.35) {\blue $5$};
\draw(5.7,4) -- (5.7,4.7);
\node[] (12) at (6.05,4.35) {\blue $6$};
\draw(6.4,4) -- (6.4,4.7);
}

{  
\draw[->] (ubxxx) ..controls +(0,-0.9) and (1.85,5.1).. (1.85,4.7);}

% 
% \onslide<8->{
% \draw(-2,4) -- (6.4,4);
% \draw(-2,4.7) -- (6.4,4.7);
% \draw(-2,4) -- (-2,4.7);
% \node[] (1) at (-1.65,4.35) {$6$};
% \draw(-1.3,4) -- (-1.3,4.7);
% \node[] (2) at (-0.95,4.35) {$1$};
% \draw(-0.6,4) -- (-0.6,4.7);
% \node[] (3) at (-0.25,4.35) {$0$};
% \draw(0.1,4) -- (0.1,4.7);
% \node[] (4) at (.45,4.35) {$11$};
% \draw(.8,4) -- (.8,4.7);
% \node[] (5) at (1.15,4.35) {$9$};
% \draw(1.5,4) -- (1.5,4.7);
% \node[] (6) at (1.85,4.35) {$10$};
% \draw(2.2,4) -- (2.2,4.7);
% \node[] (7) at (2.55,4.35) {$8$};
% \draw(2.9,4) -- (2.9,4.7);
% \node[] (1) at (3.25,4.35) {$7$};
% \draw(3.6,4) -- (3.6,4.7);
% \node[] (9) at (3.95,4.35) {$5$};
% \draw(4.3,4) -- (4.3,4.7);
% \node[] (10) at (4.65,4.35) {$4$};
% \draw(5.0,4) -- (5.0,4.7);
% \node[] (11) at (5.35,4.35) {$5$};
% \draw(5.7,4) -- (5.7,4.7);
% \node[] (12) at (6.05,4.35) {$2$};
% \draw(6.4,4) -- (6.4,4.7);
% }

{\node[] (1) at (1.85,3.85) {$\underbrace{\hspace{18.9mm}}$};
\node[] (1) at (5.0,3.85) {$\underbrace{\hspace{25.2mm}}$};
  \node[] (ubxxxx) at (1.85,3.85) {};
    \node[] (ubxxxxx) at (5,3.85) {};
\node[] (uba) at (-1,3.38) {\small merge two parts};
%\draw[->] (ubxxxx) ..controls +(0,-0.9) and (uba) +(0,0.2).. (uba);   (uba) +
}

\end{scope}

\begin{scope}[shift={(0,-5.1)}]

{
\draw[-] (ubxxxxx) ..controls +(0,-0.5) and (.45,5.4) .. (.45,4.9);
\draw[-] (ubxxxx) ..controls +(0,-0.5) and (.45,5.4) .. (.45,4.9);
\draw[->] (.45,5)-- (.45,4.7);
}

{
\draw(-2,4) -- (6.4,4);
\draw(-2,4.7) -- (6.4,4.7);
\draw(-2,4) -- (-2,4.7);

\draw(-1.3,4) -- (-1.3,4.7);

\draw(-0.6,4) -- (-0.6,4.7);

\draw(0.1,4) -- (0.1,4.7);

\draw(.8,4) -- (.8,4.7);
%\node[] (5) at (1.15,4.35) {$7$};
\draw(.8,4) -- (.8,4.7);

\draw(1.5,4) -- (1.5,4.7);

\draw(2.2,4) -- (2.2,4.7);

\draw(2.9,4) -- (2.9,4.7);

\draw[thick](3.6,4) -- (3.6,4.7);

\draw(4.3,4) -- (4.3,4.7);

\draw(5.0,4) -- (5.0,4.7);

\draw(5.7,4) -- (5.7,4.7);

\draw(6.4,4) -- (6.4,4.7);
}
{

  \node[] (6) at (-1.65,4.35) {\blue $0$};
    \node[] (7) at (-0.95,4.35) {\blue  $1$};
      \node[] (1) at (-0.25,4.35) {\darkgreen $2$};
      \node[] (4) at (.45,4.35) {\darkgreen $3$};
      \node[] (5) at (1.15,4.35) {\darkgreen $4$};
      \node[] (1) at (1.85,4.35) {\blue $5$};
        \node[] (9) at (2.55,4.35) {\blue $6$};
\node[] (10) at (3.25,4.35) {\red $7$};
  \node[] (11) at (3.95,4.35) {$11$};
   \node[] (12) at (4.65,4.35) {$9$};
  \node[] (2) at (5.35,4.35) {$8$};
  \node[] (3) at (6.05,4.35) {$10$};

}

{\node[] (1) at (5.0,3.85) {$\underbrace{\hspace{25.2mm}}$};
\node[] (1) at (3.25,3.5) {\small sort recursively with \QuickMergesort};}

\end{scope}
\end{tikzpicture}%
\end{center}
\caption{Example for the execution of {QuickMergesort}.}
\label{fig:sample}
\end{figure}
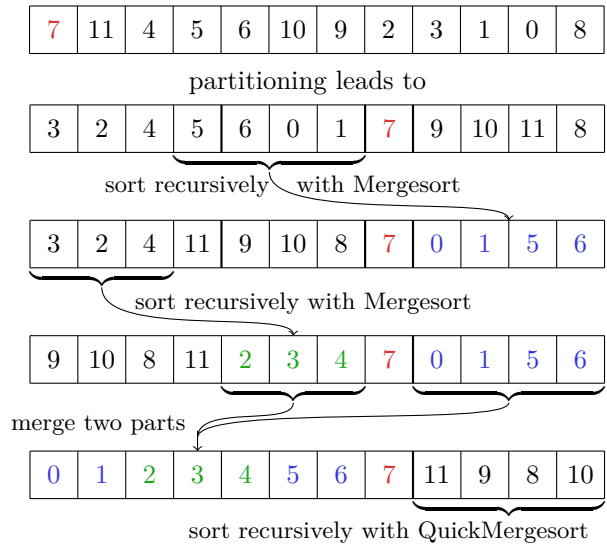

{Mergesort} requires approximately $n \log n - 1.26n$ comparisons
on average, so that with a Median-of-$\sqrt{n}$ we
obtain an internal sorting algorithm with $n \log n - 1.26n + o(n)$
comparisons on average.  One can do even better by sorting small subarrays with a more complicated algorithm requiring less comparisons~-- for details see \cite{EWCSR}.

Since the Median-of-3 variant (i.\,e.\ CleverQuickMergesortsort) shows a slightly better practical performance than with Median-of-$\sqrt{n}$ (see \cite{EWCSR}), we provide here a theoretical analysis of it by showing  that 
{CleverQuickMergesortsort} performs at most $n \log n - 0.75n + o(n)$ 
comparisons on average. 
In fact, as in \cite{EWCSR} we show a more general result for CleverQuickXsort for an arbitrary algorithm X.

\begin{theorem} [Average Case {CleverQuickXsort}] \label{thm:CQXS}
  Let the algorithm {X} perform at most $\alpha n \log n + cn + \Oh(\log n)$
  comparisons on average. Then, {CleverQuickXsort} performs at most $\alpha n \log n + (c + \kappa_\alpha) n + \Oh(\log^2
  n)$ comparisons on average with $\kappa_\alpha = \frac{4}{15}\left(12 - \frac{7\alpha}{\ln 2}\right) \leq 0{.}51$.
%  If the median-of-medians technique with parameter $\delta$ is used to prevent the worst case, the average number of comparisons is at most $n \log n + (c + %%0{.}23 + f(\delta)) n + \Oh(\log^2 n)$. ($ f \in \Oh(\delta)$)
\end{theorem}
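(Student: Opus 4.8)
The plan is to set up and solve a recurrence for the average number $T(n)$ of comparisons of {CleverQuickXsort}, by induction on $n$ with the ansatz $T(m)\le \alpha m\log m+(c+\kappa_\alpha)m+C\log^2 m$ for a sufficiently large constant $C$. Conditioning on the rank of the Median-of-3 pivot, the array splits into two parts of sizes $s_1,s_2$ with $s_1+s_2=n-1$ and $\Pr{s_1=j}=j(n-1-j)/\binom{n}{3}$; as $n\to\infty$ the fraction $s_1/(n-1)$ converges in distribution to $\mathrm{Beta}(2,2)$, i.e.\ to the density $6x(1-x)$ on $[0,1]$. One part is sorted by X and the other recursively; for the relevant range $\alpha\le\tfrac{12\ln 2}{7}$ we have $\kappa_\alpha\ge 0$, and since the $\alpha m\log m$ contributions of the two parts are the same no matter which part plays which role, it is an upper bound to charge the recursive call to the larger part $m_{\max}:=\max(s_1,s_2)$ and the X-call to $m_{\min}:=\min(s_1,s_2)$. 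Using that partitioning together with pivot selection costs $n+\Oh(1)$ comparisons, that both sub-arrays are again uniformly random permutations (so that X's average-case bound applies, as argued in \cite{EWCSR}), the assumed bound for X, and the inductive hypothesis for the recursive call, one gets
\begin{align*}
T(n)\ \le\ n+\Oh(1)\ &+\ \alpha\,\mathrm{E}\!\left[m_{\min}\log m_{\min}+m_{\max}\log m_{\max}\right]\\
&+\ c\,\mathrm{E}[m_{\min}]+(c+\kappa_\alpha)\,\mathrm{E}[m_{\max}]+\Oh(\log n)+C\,\mathrm{E}[\log^2 m_{\max}].
\end{align*}

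The crucial step is the identity $m_{\min}\log m_{\min}+m_{\max}\log m_{\max}=(n-1)\log(n-1)-(n-1)\,h\!\left(\tfrac{m_{\min}}{n-1}\right)$, where $h(x)=-x\log x-(1-x)\log(1-x)$ is the binary entropy. Hence the first expectation equals $\alpha n\log n-\alpha\mathcal H\, n+\Oh(\log n)$ with $\mathcal H:=\int_0^1 6x(1-x)\,h(x)\,dx$; using $h(x)=h(1-x)$ and $\int_0^1 x^k\ln x\,dx=-1/(k+1)^2$ one finds $\mathcal H=\tfrac{7}{12\ln 2}$. Similarly $\mathrm{E}[m_{\max}]=\tfrac{11}{16}n+\Oh(\log n)$, since for $X\sim\mathrm{Beta}(2,2)$ one has $\mathrm{E}[\min(X,1-X)]=\int_0^1 6x(1-x)\min(x,1-x)\,dx=\tfrac{5}{16}$. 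Substituting, everything beyond $\alpha n\log n+cn$ collects into the linear term $\bigl(1-\alpha\mathcal H+\tfrac{11}{16}\kappa_\alpha\bigr)n$ plus lower-order terms, which is $\le(c+\kappa_\alpha)n$ precisely when $1-\alpha\mathcal H=\tfrac{5}{16}\kappa_\alpha$, i.e.\ $\kappa_\alpha=\tfrac{16}{5}\bigl(1-\tfrac{7\alpha}{12\ln 2}\bigr)=\tfrac{4}{15}\bigl(12-\tfrac{7\alpha}{\ln 2}\bigr)$; this is decreasing in $\alpha$, so for $\alpha\ge 1$ it is at most $\kappa_1\approx 0.507\le 0.51$.

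It remains to absorb the lower-order terms into $C\log^2 n$. Here one uses that $m_{\max}\le n-1$ and, more importantly, that $\mathrm{E}[\log(m_{\max}/n)]\to\int_0^1 6x(1-x)\log\max(x,1-x)\,dx=-\mu$ for some constant $\mu>0$, so that $C\,\mathrm{E}[\log^2 m_{\max}]\le C\log^2 n-\Theta(C\log n)$; choosing $C$ large enough makes this $\Theta(C\log n)$ surplus dominate all remaining $\Oh(\log n)$ terms (the $\Oh(\log m_{\min})$ from X, the discretization errors in $\mathcal H$ and in $\mathrm{E}[m_{\max}]$, and the $\Oh(1)$ partition overhead), closing the induction for all large $n$; the finitely many small $n$ are handled by inflating $C$. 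I expect the main obstacle to be the bookkeeping with the two distinct linear coefficients $c$ (on the X-part) and $c+\kappa_\alpha$ (on the recursive part), together with verifying rigorously that the $\log^2$-surplus from recursing on a part that is, in expectation, a constant factor below $n$ really does cover every per-level $\Oh(\log n)$ term; by contrast, evaluating the integrals $\mathcal H$ and $\mathrm{E}[m_{\max}]$ over the $\mathrm{Beta}(2,2)$ law and the reduction ``recurse on the larger part'' are comparatively routine.
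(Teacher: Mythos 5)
Your proposal is correct and follows essentially the same route as the paper's proof: the same recurrence over the Median-of-3 pivot rank, induction with a $\log^2 n$ error term whose $-\Theta(\log n)$ surplus absorbs the lower-order terms, the recursive call charged to the larger part, and the same two key constants $\frac{7}{12\ln 2}$ (which the paper obtains by directly integrating $x^2(n-x)\log x$ rather than via the entropy of the $\mathrm{Beta}(2,2)$ law) and $\mathrm{E}[m_{\max}]\approx\frac{11}{16}n$, leading to the identical equation $1-\frac{7\alpha}{12\ln 2}=\frac{5}{16}\kappa_\alpha$. The only (cosmetic) difference is that the paper also treats the case $\kappa_\alpha<0$ explicitly, whereas you restrict to $\kappa_\alpha\ge 0$.
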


Since {Mergesort} requires at most $n \log n - 1.26n + o(n)$ comparisons on average, we obtain the following corollary: 
\begin{corollary} [Average Case {CleverQuickMergesort}] \label{cor:CQMS}
{CleverQuickMergesort} is an in-place algorithm that performs at most $n \log n - 0.75n + o(n)$ 
comparisons on average.
\end{corollary}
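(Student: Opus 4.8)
The plan is to derive \prettyref{cor:CQMS} as a direct specialization of \prettyref{thm:CQXS}, and -- since that deduction is short -- also to outline how I would establish \prettyref{thm:CQXS} itself, which carries all the work. For the corollary: top-down \Mergesort\ sorts with at most $n\log n - 1{.}26n + o(n)$ comparisons on average (recalled above in this section), matching the hypothesis of \prettyref{thm:CQXS} with $\alpha = 1$ and $c = -1{.}26$ -- reading the theorem's $\Oh(\log n)$/$\Oh(\log^2 n)$ bounds as the weaker $o(n)$, which the argument below tolerates without change. \prettyref{thm:CQXS} then gives at most $n\log n + (-1{.}26 + \kappa_1)n + o(n)$ comparisons, and since $\kappa_1 = \frac{4}{15}\!\left(12 - \frac{7}{\ln 2}\right) = 0{.}5073\ldots < 0{.}51$ we have $-1{.}26 + \kappa_1 < -0{.}75$, which is the claimed bound. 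In-placeness is inherited from the \QuickXsort\ construction of \prettyref{sec:quickXsort}: each time \Mergesort\ runs on one part of the array, the complementary part -- sorted only afterwards, recursively -- serves as the element buffer it needs, so beyond the input array only $\Oh(\log n)$ words (the recursion stack) are used.

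To prove \prettyref{thm:CQXS}, let $T(n)$ be the average number of comparisons of CleverQuickXsort on $n$ elements. A median-of-$3$ selection costs at most $3$ comparisons and the ensuing partitioning costs $n + \Oh(1)$; afterwards the pivot has rank $j$ with probability $p_{n,j} = (j-1)(n-j)/\binom{n}{3}$, splitting the array into parts of sizes $j-1$ and $n-j$. CleverQuickXsort sorts the \emph{smaller} part with X and recurses on the larger one; this is always admissible -- in particular for \Mergesort, since a part of size $m \le \lfloor(n-1)/2\rfloor$ needs only $\lceil m/2\rceil \le m$ buffer cells, supplied by the larger part. Hence
\[
 T(n) \;\le\; n + \Oh(1) + \sum_{j=1}^{n} p_{n,j}\Bigl(X\bigl(\min(j-1,n-j)\bigr) + T\bigl(\max(j-1,n-j)\bigr)\Bigr),
\]
and I would establish $T(n) \le \alpha n\log n + (c+\kappa_\alpha)n + d\log^2 n$ for a sufficiently large constant $d$ by strong induction on $n$ (small $n$ trivial, $T(n)=\Oh(1)$ there). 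Inserting the hypothesis on X and the induction hypothesis and using $\min(j-1,n-j)+\max(j-1,n-j)=n-1$, the linear contributions collapse to $c(n-1) + \kappa_\alpha\,\mathbb{E}[\max(j-1,n-j)]$; the $\alpha n\log n$-type contributions collapse to $\alpha n\log n + \alpha n\,g(n)$, where $g(n) = \frac1n\sum_j p_{n,j}\bigl[(j-1)\log(j-1)+(n-j)\log(n-j)\bigr] - \log n$; and the error terms collapse to $\Oh(\log n) + d\,\mathbb{E}[\log^2\max(j-1,n-j)]$.

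Now everything reduces to two estimates about the median-of-$3$ split, proved by comparing the sums to the corresponding integrals against the Beta-density $6x(1-x)$ on $[0,1]$ -- whose integrand $x^2(1-x)\log x$ vanishes at both endpoints, so the approximation error is only $\Oh(\log n)$: namely $g(n) \le -\frac{7}{12\ln 2} + \Oh(\tfrac{\log n}{n})$ and $\mathbb{E}[\min(j-1,n-j)] \ge \frac{5}{16}n - \Oh(1)$, i.e.\ $\mathbb{E}[\max(j-1,n-j)] \le \frac{11}{16}n + \Oh(1)$. Substituting these into the collapsed recurrence, the coefficient of $n$ on the right becomes $1 - \frac{7\alpha}{12\ln 2} - \frac{5}{16}\kappa_\alpha$, which is exactly $0$ by the defining identity $\kappa_\alpha = \frac{16}{5}\!\left(1 - \frac{7\alpha}{12\ln 2}\right) = \frac{4}{15}\!\left(12 - \frac{7\alpha}{\ln 2}\right)$; and $\kappa_\alpha \le 0{.}51$ is then merely the information-theoretic fact $\alpha \ge 1$ (which forces $\kappa_\alpha \le \kappa_1$). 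What is left on the right is $d\bigl(\mathbb{E}[\log^2\max(j-1,n-j)] - \log^2 n\bigr) + \Oh(\log n)$; since $\max(j-1,n-j) \le \frac34 n$ with probability at least $\frac{11}{16}$, one gets $\mathbb{E}[\log^2\max(j-1,n-j)] \le \log^2 n - \delta\log n + \Oh(1)$ for a fixed $\delta > 0$, so taking $d\delta$ larger than the accumulated $\Oh(\log n)$-constant closes the induction (this same $\frac34 n$-with-constant-probability bound also shows the recursion has depth $\Oh(\log n)$, so the per-level $\Oh(\log n)$ errors indeed sum to $\Oh(\log^2 n)$). The main obstacle is precisely this bookkeeping -- keeping every error term at $\Oh(\log n)$ through the integral approximations and making sure the $\log^2$-term of the ansatz absorbs the freshly created $\Oh(\log n)$ at each step; the concrete constants $-\frac{7}{12\ln 2},\ \frac{5}{16},\ \frac{11}{16}$ are routine Beta-integral evaluations.
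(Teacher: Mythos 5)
Your derivation of the corollary is exactly the paper's: specialize \prettyref{thm:CQXS} to Mergesort with $\alpha=1$, $c=-1.26$, and $\kappa_1\approx 0.507<0.51$, giving $-1.26+\kappa_1<-0.75$. Your sketch of \prettyref{thm:CQXS} also follows the paper's proof essentially verbatim -- same pivot-rank distribution $(k-1)(n-k)\binom{n}{3}^{-1}$, same recurrence with induction ansatz carrying a $\log^2 n$ error term, the same two key Beta$(2,2)$ evaluations $-\frac{7}{12\ln 2}$ and $\frac{11}{16}$, and the same defining identity for $\kappa_\alpha$ -- so the proposal is correct and takes the same route.
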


\begin{proof}[Proof of \prettyref{thm:CQXS}]
The probability of choosing the $k$-th element (in the ordered sequence) as pivot of a random $n$-element array is $\Pr{\!\text{pivot }= k\!} = (k-1)(n-k){\binom{n}{3}}^{-1}$ (one element of the three element set has to be less than the $k$-th, one equal to the $k$-th, and one greater than $k$-th element of the array). Note that this holds no matter whether we select the three elements at random or we use fixed positions and average over all input permutations. Since probabilities sum up to 1, we have
	\begin{align}
	\sum_{k=1}^{n}(k-1)(n-k){\binom{n}{3}}^{-1} = 1. \label{eq:prob}
	\end{align}

		Moreover, partitioning preserves randomness of the two sides of the array~-- this includes the positions where the other two elements from the pivot sample are placed (since for a fixed pivot, every element smaller (resp.\ greater) than the pivot has the same probability of being part of the sample). Also, using the array as temporary space for Mergesort does not destroy randomness since the dummy elements are never compared. 
	
	Let $T(n)$ be the average-case number of comparisons of 
{CleverQuickXsort} for sorting an input of size $n$ %(i.\,e.\ Quicksort with Median-of-3 selection)
 and let $$S(n)= \alpha n \log n + cn + d (1+\log n)$$ be a bound for the average number of comparisons
  of the algorithm X (e.\,g.\ Mergesort).
We will show by induction that $$T(n) \leq \alpha n\log n + (c+ \kappa_\alpha )n  +  D (1+\log^2 n) $$ for some constant $D\geq d$ (which we specify later such that the induction base is satisfied) and
$\kappa_\alpha = \frac{4}{15}\left(12 - \frac{7\alpha}{\ln 2}\right) \leq 0.51$ (since $\alpha\geq 1$ by the general lower bound on sorting).
As induction hypothesis
for $1\leq k \leq n$ we assume that 
\begin{align*}
\max\{\, T(k-1) & + S(n-k), T(n-k)+S(k-1)\,\}\\
&\leq \alpha (k-1) \log (k-1) + \alpha(n-k) \log(n-k) + cn + \kappa_\alpha \max\{k-1, n-k\}\\ 
&\qquad + D\left(1+\log^2 (\max\{k-1, n-k\})\right) + d\left(1+\log (\min\{k-1, n-k\})\right) \\  &=: f(n,k).
\end{align*}

%If $\epsilon n \leq \Oh(\log n)$ for the $ \Oh(\log n)$-term which appears below, then we choose $C_\epsilon$ such that $T(n) \leq C_\epsilon$. Note that this $\Oh(\log
%n)$-term does not depend on $\epsilon$ or any other parameter.
 In order to find the pivot element, three comparisons are needed. After that, for partitioning $n-3$ comparisons are performed (all except the three elements of the pivot sample are compared with the pivot). Since after partitioning, one part of the array is sorted with X and the other recursively with CleverQuickXsort, we obtain the recurrence
 {\allowdisplaybreaks
\begin{align}
\nonumber	T(n) &\leq n + \sum_{k=1}^n \Pr{\!\text{pivot }= k\!}\cdot\max\left\{\, T(k-1) + S(n-k), T(n-k)+S(k-1)\,\right\}\\
\nonumber	&\leq n + \sum_{k=1}^n \frac{(k-1)(n-k)}{\binom{n}{3}}
	f(n,k)
	\\
	&\leq n%\label{eq:nplus2}  
	%&\qquad
        + \frac{1}{\binom{n}{3}}\sum_{k=1}^n (k-1)(n-k) \Bigl(\alpha (k-1)  \log (k-1) +	  \alpha(n-k) \log(n-k)\Bigr) \label{eq:nlogn}\\ 
	&\qquad +  \frac{1}{\binom{n}{3}}\sum_{k=1}^n (k-1)(n-k) \kappa_\alpha \max\{k-1, n-k\}  \label{eq:ckappa}\\  
	&\qquad + \frac{1}{\binom{n}{3}}\sum_{k=1}^n (k-1)(n-k) D\log^2(\max\{k-1, n-k\}) \label{eq:logsquare}\\ 
	&\qquad +  \frac{1}{\binom{n}{3}}\sum_{k=1}^n (k-1)(n-k) \Bigl(c n +   D + d +d\log\left( \min\{k-1, n-k\}\right)\Bigr)\label{eq:rest}
\end{align}}%
We simplify the terms \prettyref{eq:nlogn}--\prettyref{eq:rest} separately using \texttt{http://www.wolframalpha.com/} for evaluating the sums and integrals.
The function $x \mapsto g(x) = (x-1)^2(n-x) \log (x-1) $ is non-negative and has a single maximum for $1 \leq x \leq n$ at position $x=\xi$; on the left of $\xi$, it is monotonically increasing, on the right monotonically decreasing. Therefore, 
\begin{align*}
\sum_{k=1}^n g(k) &= \sum_{k=1}^{\floor{\xi}} g(k) + \sum_{k=\floor{\xi} + 1}^n g(k)\
\leq \int_{1}^{\floor{\xi}} g(x)\,d x + \int_{\floor{\xi} + 1}^n g(x)\,d x + 2g(\xi).
\end{align*}
Since the second term of \prettyref{eq:nlogn} is obtained from the first one by a substitution $k \mapsto n + 1 - k$, it follows that 
\begin{align*}
\prettyref{eq:nlogn} -n & \leq \frac{\alpha}{\binom{n}{3}}\cdot\Biggl(\int_{1}^{n} g(x)\,d x + \int_{1}^n g(n + 1 - x)\,d x + 4g(\xi)\Biggr)\\
&\leq  \frac{\alpha}{\binom{n}{3}}\cdot\Biggl(2\int_{1}^n x^2(n-x) \log x \,d x + 4 n^3\log n\Biggr)\\
&=
 \frac{\alpha}{\binom{n}{3}}\cdot\Biggl(\frac{2}{144 \ln 2}n^4(12\ln n-7) + 4 n^3\log n\Biggr) \leq \alpha n\log n-\frac{7\alpha}{12 \ln 2}n + c_3\alpha\log n
\end{align*}
for some properly chosen constant $c_3$.
Now, first assume that $\kappa_\alpha \geq 0$. Then we have
\begin{align*}
\prettyref{eq:ckappa} & \leq\frac{2\kappa_\alpha}{\binom{n}{3}}\sum_{k=1}^{\ceil{n/2}} (k-1)(n-k)^2 \leq \frac{2\kappa_\alpha n}{192\binom{n}{3}}\left(11n^3 - 20n^2 - 44n + 80 \right) \leq \frac{11}{16}\kappa_\alpha n + c_4
\end{align*}
for some constant $c_4$.
On the other hand, if $\kappa_\alpha < 0$, we have
\begin{align*}
\prettyref{eq:ckappa} & \leq\frac{2\kappa_\alpha}{\binom{n}{3}}\sum_{k=1}^{\floor{n/2}} (k-1)(n-k)^2 \leq \frac{2\kappa_\alpha n}{192\binom{n}{3}}\left(11n^3 - 68n^2 - 100n + 16 \right) \leq \frac{11}{16}\kappa_\alpha n + c_4
\end{align*}
for some constant $c_4$. Thus, in any case, we have $\prettyref{eq:ckappa} \leq \frac{11}{16}\kappa_\alpha n + c_4$. With the same argument as for \prettyref{eq:nlogn}, we have
\begin{align*}
\prettyref{eq:logsquare} &\leq \frac{2D}{\binom{n}{3}}\sum_{k=\floor{n/2}}^{n} (k-1)(n-k)\log^2(k-1)\\
&\leq \frac{2D}{\binom{n}{3}}\int_{\floor{n/2}}^{n} (x-1)(n-x)\log^2(x-1)\,dx + D\cdot c_5' \leq D \log^2 n - \frac{5D}{3}\log n + D \cdot c_5
\end{align*}
for some constants $c_5'$ and $c_5$.
Finally, by \prettyref{eq:prob}, we have
\begin{align*}
\prettyref{eq:rest} \leq c n +   D + d + d \log(n/2) = cn + D + d\log n.
\end{align*}
Now, we combine all the terms and obtain
\begin{align*}
T(n) 
&\leq \alpha n \log n + n\left(1+ \frac{-7\alpha}{12\ln 2} + c + \frac{11}{16}\kappa_\alpha  %\left(\frac{1}{2} - \frac{3}{16}\right)
\right)\\
&\qquad + c_3\alpha\log n + c_4 + D\log^2n  - \frac{5D}{3}\log n +  Dc_5 + D +  d\log n
\end{align*}
We can choose $D$ such that $ \frac{5D}{3}\log n \geq  c_3\alpha\log n + c_4 +  Dc_5 + D +  d\log n$ for $n$ large enough and $D \geq T(n)$ for all smaller $n$. Hence, we conclude the proof of \prettyref{thm:CQXS}:
\begin{align*}
T(n) 
&\leq \alpha n \log n + n\left(1+ \frac{-7\alpha}{12\ln 2} + c + \frac{11}{16}\kappa_\alpha  \right) + D\log^2n + D\\
& =  \alpha n \log n + n\left(1+ \frac{-7\alpha}{12\ln 2} + c + \frac{11}{16}\cdot\frac{4}{15}\left(12 - \frac{7\alpha}{\ln 2}\right)  \right) + D\log^2n + D\\
&= \alpha n \log n + (c + \kappa_\alpha )n + D\log^2n + D.
\end{align*}
\end{proof}

Notice that in the case that in each recursion level always the smaller part is sorted with X, the inequalities in the proof of \prettyref{thm:CQXS} are tight up to some lower order terms. Thus, the proof can be easily modified to provide a lower bound of $\alpha n \log n + (c + \kappa_\alpha) n - \Oh(\log^2
n)$ comparisons in this special case.

\section{{QuickMergeXsort}}\label{sec:QMQS}

{QuickMergeXsort} agrees with {QuickMergesort} up to the following change: for arrays of size smaller than
some threshold cardinality {X\_THRESH}, the sorting algorithm X is called  (instead of Mergesort)
and the sorted elements are moved to 
their target location expected by
{QuickMergesort}. 

Fig.~\ref{fig:pseudocode} provides the full implementation details of
{QuickMerge(X)sort} (in {C++}). The realization of the sorting
algorithm X and the partitioning algorithm have to be added. The listing shows that by dropping the base cases
from {QuickMergesort} the code is short enough for textbooks on
algorithms and data structures. The general principle is that we have
a merging step that takes two sorted areas, merges and swaps them
into a third one.

The program \emph{msort} applies {Mergesort} with X
as a stopper. It goes down the recursion tree and shrinks the size of
the array accordingly.  If the array is small enough, the algorithm
calls X followed by a joint movement (memory copy) of array elements
(the only change of code wrt.\ {QuickMergesort}).
The algorithm \emph{out} serves as an interface between 
the recursive procedure \emph{msort} and top-level procedure \emph{sort}. 
Last, but not least, we have the overall internal sorting 
algorithm \emph{sort}, that performs the partitioning.
\lstset{ %
  backgroundcolor=\color{white},   % choose the background color; you must add \usepackage{color} or \usepackage{xcolor}
  basicstyle=\footnotesize,        % the size of the fonts that are used for the code
  breakatwhitespace=false,         % sets if automatic breaks should only happen at whitespace
  breaklines=true,                 % sets automatic line breaking
  captionpos=b,                    % sets the caption-position to bottom
  commentstyle=\color{mygreen},    % comment style
  deletekeywords={...},            % if you want to delete keywords from the given language
  escapeinside={\%*}{*)},          % if you want to add LaTeX within your code
  extendedchars=true,              % lets you use non-ASCII characters; for 8-bits encodings only, does not work with UTF-8
  frame=single,	                   % adds a frame around the code
  keepspaces=true,                 % keeps spaces in text, useful for keeping indentation of code (possibly needs columns=flexible)
  keywordstyle=\color{blue},       % keyword style
  language=Octave,                 % the language of the code
  otherkeywords={*,...},           % if you want to add more keywords to the set
  numbers=left,                    % where to put the line-numbers; possible values are (none, left, right)
  numbersep=5pt,                   % how far the line-numbers are from the code
  numberstyle=\tiny\color{mygray}, % the style that is used for the line-numbers
  rulecolor=\color{black},         % if not set, the frame-color may be changed on line-breaks within not-black text (e.g. comments (green here))
  showspaces=false,                % show spaces everywhere adding particular unaderscores; it overrides 'showstringspaces'
  showstringspaces=false,          % underline spaces within strings only
  showtabs=false,                  % show tabs within strings adding particular underscores
  stepnumber=1,                    % the step between two line-numbers. If it's 1, each line will be numbered
  stringstyle=\color{mymauve},     % string literal style
  tabsize=10,	                   % sets default tabsize to 2 spaces
  title=\lstname                   % show the filename of files included with \lstinputlisting; also try caption instead of title
}
\begin{figure}[t]
  \begin{center}
    \begin{minipage}{14cm}
  \begin{lstlisting}[language = C]
 typedef std::vector<t>::iterator iter; 
 void merge(iter begin1, iter end1, iter target, iter endtarget) {
    iter i1 = begin1, i2 = target + (end1 - begin1), ires = target;
    t temp = *target;
    while (i1 != end1 && i2 != endtarget) {
      iter tempit = (*i1 < *i2) ? i1++ : i2++;
      *ires++ = *tempit; *tempit = *ires;
    }
    while(i1 < end1) { *ires++ = *i1; *i1++ = *ires; }
    *(i1 - 1) = temp;
  }
  void msort(iter begin, iter end, iter target)  {
    index n = end - begin;
    if (n < X_THRESH) {
      X(begin, end);
      for(int i=0; i<n; i++) std::swap(begin[i], target[i])
    }
    else {
      index q = n / 2;
      msort(begin + q, end, target + q);
      msort(begin, begin + q, begin + q);
      merge(begin + q, begin + n , target, target + n);
    }
  }
  void out(iter begin, iter end, iter temp) {  
    index n = end - begin;
    if (n > 1) {  
      index q = n / 2, r = n - q;
      msort(begin + q, end, temp);
      msort(begin, begin + q, begin + r);
      merge(temp , temp + r , begin, end);
    }
  }
  void sort(std::vector<t> &a) {
    iter begin = a.begin(), end = a.end();      
    while (begin < end) {
      iter b = partition(begin,end);
      if (b < (end + begin)/2) { out(begin, b, b+1); begin = b+1; } 
      else { out(b+1, end, begin); end = b; }
    }
  }
  \end{lstlisting}
  \end{minipage}
\end{center}
  \vspace{-1cm}
\caption{Implementation of QuickMergeXsort.}
\label{fig:pseudocode}
\end{figure}
The following result is a generalization of the $1
\cdot n \log n + cn +o(n)$ average comparisons bound in~\cite{EWCSR}.
Indeed, the proof is almost a verbatim copy of the proof of \cite[Thm.\ 1]{EWCSR} %, which can be found in \cite{EdelkampW13QuickArxiv}
(compare to the role of $\alpha$ in the proof of \prettyref{thm:CQXS}). 
\begin{theorem} [Average-Case \QuickXsort{}]
\label{thm:quickXsort}
For $\alpha \ge 1$ let X be some sorting algorithm requiring at most
$\alpha \cdot n \log n + cn +o(n)$ comparisons on average. Then,
\QuickXsort{} with a Median-of-$\sqrt{n}$ pivot selection 
also needs at most $\alpha \cdot n \log n + cn
+o(n)$ comparisons on average. 
\end{theorem}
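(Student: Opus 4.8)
The plan is to follow the template of the proof of \prettyref{thm:CQXS} (and of \cite[Thm.~1]{EWCSR}): write a recurrence for the average comparison count $T(n)$ of \QuickXsort{} and bound it by strong induction on $n$, the point being that a Median-of-$\sqrt n$ pivot is so sharply concentrated around the true median that every ``overhead'' contribution collapses to $o(n)$. Write the assumed bound for X as $S(n) = \alpha n\log n + cn + R(n)$ with $R(n)=o(n)$, and let $s = \ceil{\sqrt n}$ be the sample size. Selecting the median of the $s$ sampled elements costs $V(s)=\Oh(s\log s)=o(n)$ comparisons (sorting the sample, say), and partitioning the remaining elements costs at most $n-s$, so
\[
T(n)\;\le\; V(s) + (n-s) + \sum_{k}\Pr{\text{pivot}=k}\cdot\max\bigl\{\,T(k-1)+S(n-k),\; T(n-k)+S(k-1)\,\bigr\}.
\]
As in \prettyref{thm:CQXS}, I would first note that partitioning preserves randomness of both sides (including the positions of the remaining sample elements), and that the dummy-element bookkeeping of \Mergesort{} never causes a comparison, so that the recursive occurrences of $T$ are genuinely the same function. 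I would then prove by strong induction that for every fixed $\varepsilon>0$ there is a constant $M=M(\varepsilon)$ with $T(n)\le \alpha n\log n + cn + \varepsilon n + M$ for all $n$; the instances $n\le n_1(\varepsilon)$ hold by taking $M$ large, and letting $\varepsilon\to 0$ at the end yields $T(n)\le \alpha n\log n + cn + o(n)$.

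The quantitative heart of the argument is a concentration estimate for the pivot rank $k$. Since the sample is a uniform $s$-subset of the $n$ ranks, the count of sample elements below any fixed rank is hypergeometrically distributed, so a Hoeffding bound gives $\Pr{\abs{k-\tfrac{n+1}{2}}>t}\le 2\exp(-2st^2/n^2)$; with $t=n^{3/4}\log n$ and $s=\ceil{\sqrt n}$ this is $2\exp(-2\log^2 n)$, smaller than any polynomial in $n$. I would split the sum over $k$ at this window. For the ``unbalanced'' tail I bound the summand by the induction hypothesis (legitimate since $\max(k-1,n-k)<n$) together with the crude $S(\cdot)=\Oh(n\log n)$, so the whole tail contributes $\Pr{\text{bad}}\cdot\Oh(n\log n)=o(1)$. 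For the ``balanced'' part, writing $\{a,b\}=\{k-1,n-k\}$ with $a\ge b$, we have $a+b=n-1$ and $\abs{a-b}=\Oh(n^{3/4}\log n)$, hence both $a,b=\tfrac{n}{2}(1+o(1))$; plugging in the induction hypothesis and the bound for X bounds the summand by $\alpha(a\log a + b\log b)+c(n-1)+\varepsilon a+\tfrac\varepsilon4 b+M$, and a one-line Taylor estimate of $x\mapsto x\log x$ around $n/2$ gives $a\log a + b\log b \le n\log n - n + \Oh(n^{3/4}\log^2 n)$.

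Collecting everything, $T(n)\le \alpha n\log n + cn + \varepsilon n + M + \bigl[(1-\alpha)n + V(s) + \Oh(n^{3/4}\log^2 n) + o(1)\bigr]$. Because $\alpha\ge 1$ the term $(1-\alpha)n$ is $\le 0$, so the bracket is $o(n)$ and in particular $\le \varepsilon n$ once $n$ exceeds some $n_1(\varepsilon)$; this closes the induction.

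The step where the Median-of-$\sqrt n$ hypothesis is genuinely needed — and the main obstacle — is the balanced-range estimate. A coarser window such as $k\in[n/4,3n/4]$ would only give $a\log a + b\log b \le n\log n - \Theta(n)$ with a constant strictly below $1$, which, after multiplication by $\alpha$ and addition of the $\Theta(n)$ partitioning cost, would leave an unabsorbed $\Theta(n)$; it is exactly the $o(1)$ relative deviation of the sample median that makes $\alpha(n\log n - n)$ cancel the linear partitioning cost up to lower-order terms. The remaining steps — fixing $n_1$ and $M$, handling sub-instances that fall below the induction threshold, and discarding the $o(1)$ tail from rare bad pivots — are routine and parallel \cite[Thm.~1]{EWCSR}.
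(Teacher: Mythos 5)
Your overall strategy is the right one and essentially the same as the paper's: the paper does not reprove this theorem but defers to the proof of Theorem~1 in \cite{EWCSR}, which likewise sets up the recurrence, uses concentration of the median of a $\sqrt{n}$-sample to restrict attention to pivot ranks $k=\frac n2+\Oh(n^{3/4}\log n)$, and closes an induction carrying an $\varepsilon n$ slack term. Your hypergeometric/Hoeffding concentration bound, the disposal of the unbalanced tail, the Taylor estimate $a\log a+b\log b\le n\log n-n+\Oh(n^{3/4}\log^2 n)$, and your closing remark on why a Median-of-3 sample could not give the same conclusion are all correct.

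However, the final step as written does not close the induction. From the hypothesis $T(m)\le\alpha m\log m+cm+\varepsilon m+M$ for $m<n$ you derive $T(n)\le\alpha n\log n+cn+\varepsilon n+M+B$ with $B=(1-\alpha)n+V(s)+\Oh(n^{3/4}\log^2 n)+o(1)$, and then argue $B\le\varepsilon n$; but that only gives $T(n)\le\alpha n\log n+cn+2\varepsilon n+M$, which is not the statement you must establish for $n$. Nor can you argue $B\le 0$: for $\alpha=1$ the bracket is genuinely positive, of order $n^{3/4}\log^2 n$. The repair is already contained in your own intermediate bound: the $\varepsilon$-terms inherited from the recursive call and from $X$ amount to $\varepsilon a+\frac{\varepsilon}{4}b$ with $a,b\le\frac n2+n^{3/4}\log n$, hence at most $\frac58\varepsilon n+\Oh(\varepsilon n^{3/4}\log n)$ --- not $\varepsilon n$. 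Keeping this, you obtain $T(n)\le\alpha n\log n+cn+\frac58\varepsilon n+M+B'$ with $B'=o(n)$ (here $\alpha\ge1$ is used only to guarantee that $(1-\alpha)n\le 0$), and for $n\ge n_1(\varepsilon)$ one has $B'\le\frac38\varepsilon n$, which restores $T(n)\le\alpha n\log n+cn+\varepsilon n+M$. In other words, what closes the induction is not that the overhead is small in absolute terms but that the recursion passes only about half of the $\varepsilon$-budget to the subproblem, leaving $\Theta(\varepsilon n)$ of slack to absorb all $o(n)$ overhead; you computed exactly that slack and then discarded it by rounding $\varepsilon a+\frac{\varepsilon}{4}b$ up to $\varepsilon n$ before absorbing the bracket.
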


We are now ready to analyze the average-case performance of
{QuickMergeXsort}.

\begin{theorem} [Average-Case {QuickMergeXsort}/{CleverQuickMergeXsort}]
\label{thm:quickmergeXsqrtsortac}
Let X be a sorting algorithm with $\alpha \cdot n \log n + c n +
o(n)$ comparisons in the average case, called when reaching $\ceil{n^\beta}$
elements, $0 <\beta < 1$. 
Then, \QuickMergeXsort{} with Median-of-$\sqrt{n}$ pivot
selection, as well as with Median-of-3
pivot selection, is a sorting algorithm that needs at most
$(\alpha \beta + (1 - \beta)) \cdot n \log n + \Oh(n)$
comparisons in the average case.
\end{theorem}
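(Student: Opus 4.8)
\noindent\textbf{Proof plan for \prettyref{thm:quickmergeXsqrtsortac}.}
The plan is to isolate the ``pure'' merging algorithm hidden inside \QuickMergeXsort{} and then feed it into the analysis of \prettyref{thm:CQXS} (for the Median-of-3 case) respectively \prettyref{thm:quickXsort} (for the Median-of-$\sqrt n$ case) in the role of~X.

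First I would fix the threshold $m=\ceil{n^\beta}$ and analyse $\mathrm{MergeXsort}_m$, i.e.\ plain top-down binary \Mergesort{} that sorts any subarray of size $\le m$ with~X. On a randomly ordered input of size $N$ its recursion tree has depth at most $\log N-\log m+1$, and each level contributes at most $N$ merge comparisons, for a total of $N\log N-N\log m+\Oh(N)$. The leaves are subarrays of sizes $k_1,k_2,\dots$ with $\sum_i k_i=N$ and each $k_i\le m$; since \Mergesort{} preserves randomness and its scratch positions hold dummies that are never compared (both exactly as argued in the proof of \prettyref{thm:CQXS}), each leaf is sorted by~X in at most $\alpha k_i\log k_i+\Oh(k_i)$ comparisons on average, and using $k_i\log k_i\le k_i\log m$ these sum to at most $\alpha N\log m+\Oh(N)$. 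Hence the average number of comparisons of $\mathrm{MergeXsort}_m$ on size $N$ is at most $S(N):=N\log N+(\alpha-1)N\log m+\Oh(N)$. Because $m=\ceil{n^\beta}$ gives $\log m=\beta\log n+\Oh(1)$, this reads $S(N)\le N\log N+\hat c\,N+\Oh(N)$ with $\hat c:=(\alpha-1)\beta\log n$; for any fixed input size $n$ the threshold $m$, and hence $\hat c$, is a constant, and it is the \emph{same} constant for every recursive \QuickMergeXsort{} call on that input.

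Next I would note that \QuickMergeXsort{} is exactly \QuickXsort{} with $\mathrm X=\mathrm{MergeXsort}_m$ (once a recursive part drops below $m$ elements, all remaining work on it is a \QuickXsort{} on $<m$ elements, costing $\Oh(m\log m)=\Oh(n^\beta\log n)=o(n)$ in total because $\beta<1$, which disappears into the $\Oh(n)$ term). Feeding the bound $S(\cdot)$ — of the form ``leading constant $1$, linear constant $\hat c$'' — into the argument of \prettyref{thm:CQXS} with $\alpha\leftarrow1$, $c\leftarrow\hat c$ (Median-of-3), or into \prettyref{thm:quickXsort} (Median-of-$\sqrt n$), yields
\[
T(n)\ \le\ n\log n+(\hat c+\kappa_1)\,n+\Oh(\log^2 n)\ =\ n\log n+(\alpha-1)\beta\,n\log n+\Oh(n)\ =\ \bigl(\alpha\beta+(1-\beta)\bigr)\,n\log n+\Oh(n),
\]
where $\kappa_1=\tfrac{4}{15}\bigl(12-\tfrac{7}{\ln 2}\bigr)=\Oh(1)$ is the penalty from \prettyref{thm:CQXS} (and there is no linear penalty in the Median-of-$\sqrt n$ case).

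The main obstacle — the one point that really needs checking — is that $\hat c$ is not a genuine constant but grows like $\log n$ with the input size, so one must verify that the lower-order error produced by the \QuickXsort{} analysis stays $o(n)$ and does not, through some $\hat c$-dependence, swell to $\Theta(n\log n)$. It does stay $o(n)$: in the proof of \prettyref{thm:CQXS} that error is $D\log^2n+D$, where $D$ is chosen only from the \emph{polylogarithmic} coefficients of~X's comparison bound — which for $\mathrm{MergeXsort}_m$ are $\Oh(1)$ — together with the universal constants $c_3,c_4,c_5$ and $\kappa_1$, and not from the linear coefficient; hence $D=\Oh(1)$ and the error is $\Oh(\log^2 n)$. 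A minor secondary point is the bookkeeping that makes the merge contribution come out as $-N\log m$ with coefficient exactly $-1$, so that it combines with the leaves' $+\alpha N\log m$ into $(\alpha-1)N\log m$.
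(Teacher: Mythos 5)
Your proposal follows essentially the same route as the paper: both analyse MergeXsort (top-down \Mergesort{} with X below the threshold) as a standalone algorithm~-- merge levels contributing $N\log N-N\log m+\Oh(N)$, leaves contributing $\alpha N\log m+\Oh(N)$~-- and then feed the resulting bound into \prettyref{thm:quickXsort} resp.\ \prettyref{thm:CQXS} in the role of X. The only difference is bookkeeping: the paper packages the threshold term into the leading coefficient $\alpha\beta+(1-\beta)$ of MergeXsort before invoking the meta-theorems, whereas you keep the leading coefficient at $1$ and carry an $n$-dependent linear coefficient $\hat c=(\alpha-1)\beta\log n$ through them, explicitly verifying that the $D\log^2 n$ error term does not depend on $\hat c$~-- a point the paper leaves implicit.
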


\begin{proof} To begin with we analyze {MergeXsort}, i.e., Mergesort, with
  recursion stopper X.  
%	As average case we
%have at most $1 \cdot n \log n - 1.26n + o(n)$ element comparisons for
%{Mergesort} and $\alpha \cdot n \log n + c n + o(n)$ for X.  
%
%Let $n^\beta$ and $n^{1-\beta}$ 
%be integers (we will drop this assumption later on). 
We assume that every path of the recursion tree of Mergesort has the same length until the algorithm switches to X. This can be easily implemented and guarantees that all calls to X are made on arrays of almost identical size.

First, we look at the $\ceil{(\log n) \cdot (1-\beta)}$ top layers of the
recursion tree, which are sorted by {Mergesort}.  In the worst-case, in layer $i$
of the tree, {Mergesort} requires at most $n-2^i < n$ comparisons, so
that in total we have at most
%\begin{eqnarray*} 
$C_{\text{MergeXsort}} (n) % = \sum_{i=0}^{\lg(n^\beta)} n 
%&=& (n/2) \cdot n \log n - 2\cdot (2^{\lg(\sqrt{n})}-1) \\
%&=& 
= n \cdot \ceil{(1-\beta) \cdot \log n}$ %- 2\cdot\sqrt{n} + 2 
%\end{eqnarray*} 
element comparisons. 
The average case differs only negligibly from the worst case.

In the $\ceil{(\log n) \cdot (1-\beta)}$ recursion levels of Mergesort, $2^{\ceil{(1-\beta)\log n)}}$ sorted arrays are merged to one large sorted array. Each of the $g_{\beta}(n) = 2^{\ceil{(1-\beta)\log n}}$ arrays is of size at most $f_\beta(n)= \ceil{2^{\log n - \ceil{(1-\beta)\log n}}} \leq \ceil{n^\beta}$.

Next,  we look at the $g_{\beta}(n) = 2^{\ceil{(1-\beta)\log n}}$ calls to X. Let $C_X(n)$ denote the average number of element comparisons executed by all calls of X. % together. 
Given that $g_{\beta}(n) f_{\beta}(n) = n+ \Oh(n^{1-\beta})$ and 
$\log f_\beta(n) = \log \ceil{2^{\log n - \ceil{(1-\beta)\log n}}} = \log n - \ceil{(1-\beta)\log n} + \Oh(1/n^\beta)$,
we obtain%with
%$c' = c \cdot \alpha$ at most
%$C_{X} (n) = 
%2^{\ceil{(1-\beta)\log n)}} \cdot (\alpha \cdot n^\beta \log n^\beta + c n^\beta + o(n^\beta))  
%= (\alpha\beta) \cdot n \log n + c n + o(n)$
%
\begin{align*}
C_{X} (n) &= g_{\beta}(n) \cdot (\alpha \cdot f_\beta(n)  \log f_\beta(n) + c f_\beta(n)  + o(f_\beta(n) ))  \\
&=  \alpha \cdot n  \log f_\beta(n) + cn  + o(n)  %\\
%&=& \alpha \cdot n \log \sqrt{n} - c n + o(n) \\ 
%&
= \alpha \cdot n \left(\log n - \ceil{(1-\beta)\log n}\right) + c n + o(n)
\end{align*}
In cumulation, for the
average-case number of comparisons of {MergeXsort} we have the following upper bound \begin{align*}
C_{\text{MergeXsort}}(n) &= C_{X}
(n) + C_{\text{MergeXsort}} (n)\\
& \leq \alpha \cdot n \left(\log n - \ceil{(1-\beta)\log n}\right) + c n +
o(n) + n \ceil{(1-\beta) \log n} \\
&=n \bigl(\alpha \cdot \log n - (\alpha - 1) \ceil{(1-\beta)\log n}\bigr) + c n +
o(n) \\
&= (\alpha\beta+(1-\beta)) \cdot n
\log n + \Oh(n).
\end{align*}
%\end{eqnarray*} 

%Using the values $\lceil n^\beta \rceil$ and $\lceil n^{1-\beta}
%\rceil$ for $n^\beta$ and $n^{1-\beta}$, respectively, only affects
%the linear term, so that $C_{Total} = C_{X} (n) +
%C_{\text{MergeXsort}} (n) = (\alpha\beta+(1-\beta)) \cdot n \log n + \Oh(n)$.

Using Theorem~\ref{thm:quickXsort} (resp.\ \prettyref{thm:CQXS} for Median-of-3) we obtain
the matching bound of at most $(\alpha\beta+(1-\beta)) \cdot n \log n + 
\Oh(n)$ element comparisons on average for {QuickMergeXsort}.
\end{proof}

Theorem~\ref{thm:quickmergeXsqrtsortac} 
%and $1.094 \approx 1+(7/12) \ln 2$, 
%$(-2.255 -1.26)/2 \approx
%-1.758$.  and $(-2.255 -0.66)/2 \approx -1.458$.
implies that CleverQuickMergeXsort{} implemented with 
{CleverQuicksort} as recursion stopper
at $\sqrt{n}$ elements ($\beta = 1/2$%~--  CleverQuickMergeCleverQuicksort
) is a sorting
algorithm that needs at most 
$((\alpha+1)/2) \cdot n \log n + \Oh(n) = 1.094 \cdot n \log n +\Oh(n)$
comparisons on average.

\section{Worst-Case Efficient QuickMergeSort}\label{sec:worstcase}

Although QuickMergesort has an $\Oh(n^2)$ worst-case running time, is is quite simple to guarantee a worst-case number of comparisons of $n \log n + \Oh(n)$: just choose the median of the whole array as pivot. This is essentially how InSituMergesort~\cite{ElmasryKS12} works. The most efficient way for finding the median is using Quickselect \cite{Hoare61_find} as applied in InSituMergesort. However, this does not allow the desired bound on the number of comparisons (even not when using IntroSelect as in~\cite{ElmasryKS12}). Alternatively, one could use the median-of-medians algorithm \cite{BFPRT73}, which, while having a linear worst-case running time, on average is quite slow. In this section we describe a slight variation of the median-of-medians approach, which combines a linear worst-case running time with almost the same average performance as InSituMergesort.

Again, the crucial observation is that it is not necessary to use the actual median as pivot. As remarked in \prettyref{sec:quickXsort}, the larger of the two sides of the partitioned array can be sorted with Mergesort as long as the smaller side contains at least one third of the total number of elements. Therefore, it suffices to find a pivot which guarantees such a partition. For doing so, we can apply the idea of the median-of-medians algorithm: for sorting an array of $n$ elements, we choose first $n/3$ elements as median of three elements each. Then, the median-of-medians algorithm is used to find the median of those $n/3$ elements. This median becomes the next pivot. Like for the median-of-medians algorithm \cite{BFPRT73}, this ensures that at least $2\cdot\floor{n/6}$ elements are less or equal and at least the same number of elements are greater or equal than the pivot~-- thus, always the larger part of the partitioned array can be sorted with Mergesort and the recursion takes place on the smaller part.
The big advantage over the straightforward application of the median-of-medians algorithm it that it is called on an array of only size $n/3$ (with the cost of introducing a small overhead for finding the $n/3$ medians of three)~-- giving less weight on its big constant for the linear number of comparisons. 
We call this algorithm MoMQuickMergesort (MOMQMS).

  In our implementation of the median-of-medians algorithm, we use select the pivot as median of the medians of groups of five elements~-- we refer to \cite[Sec.\ 9.3]{CLRS09} for a detailed description.
  The total number $T(n)$ of comparisons in the worst case of MoMQuickMergesort is bounded by
$$T(n) \leq T(n/2) + S(n/2) + M(n/3) + \frac{n}{3}\cdot 3 + \frac{2}{3}n$$
where $S(n)$ is the number of comparisons incurred by Mergesort and $M(n)$ the number of comparisons for the median-of-medians algorithm. We have $M(n) \leq 22n$ (for the variant used in our implementation, which uses seven comparisons for finding the median of five elements). The $\frac{n}{3}\cdot 3$-term comes from finding $n/3$ medians of three elements, the $2n/3$ comparisons from partitioning the remaining elements (after finding the pivot, the correct side of the partition is known for $n/3$ elements).

Since by  \cite{Knu73} we have $S(n) \leq n\log n - 0.9n$, this yields
$$T(n) \leq T(n/2) + \frac{n}{2}\log(n/2) - \frac{0.9 n}{2}+ \frac{22}{3}n + \frac{5}{3}n$$
resolving to $T(n) \leq n\log n + 16.1 n$. 

For our implementation we also use a slight improvement over the basic median-of-medians algorithm by using the approach of adaption, which was first introduced in \cite{MartinezPV10} for Quickselect and recently applied to the median-of-medians algorithm \cite{Alexandrescu17}. More specifically, whenever in a recursive call the $k$-th element is searched with $k$ far apart from $n/2$ (more precisely for $k \leq 0.3n$ or $k\geq 0.7n$), we do not choose the median of the medians as pivot but an element proportional to $k$ (while still guaranteeing that at least $0.3n$ elements are discarded for the next recursive call as in \cite{BFPRT73}).

 Notice that in the presence of duplicate elements, we need to apply three-way partitioning for guaranteeing that worst-case number of comparisons (that is elements equal to the pivot are placed in the middle and not included into the recursive call nor into Mergesort). With the usual partitioning (as in our experiments), we obtain a worse bound for the worst case since it might happen that the smaller part of the array has to be sorted with Mergesort. 

In order to achieve the guarantee for the worst case together with the efficiency of the Median-of-3 pivot sampling, we can combine the two approaches using a  trick similar to {Introsort} \cite{Mus97}:  we fix some small $\delta >0$.  Whenever the pivot is
contained in the interval $\left[\delta n, (1-\delta)n \right]$, the next pivot is selected as  Median-of-3, otherwise according to the worst-case efficient procedure described in the previous section~-- for the following pivots switch back to Median-of-3. When choosing  $\delta$ not too small, the worst case number of comparisons will be only approximately $2n$ more than of MoMQuickMergesort (because in the worst case before every partitioning step according to MoMQuickMergesort, there will be one partitioning step with Median-of-3 using $n$ comparisons), while the average is almost as CleverQuickMergesort. We propose $\delta = 1/16$.
We call this algorithm HybridQuickMergsort (HQMS). 

\section{Experiments}\label{sec:experiments}

The collection of sorting algorithms we considered for comparison is
much larger than the one we present here, but the bar of being
competitive wrt.\ state-of-the-art library implementations in
{C++} and {Java} on basic data types is surprisingly
high. For example, all {Heapsort} variants we are aware of 
fail this test, we checked refined implementations of
{Binary Heapsort}~\cite{Flo64,Wil64},
{Bottom-Up Heapsort}~\cite{Weg93}, {MDR Heapsort} \cite{Wegener:MDR},
{QuickHeapsort}~\cite{DiekertW13Quick}, and
{Weak-Heapsort}~\cite{Dut93}.
Some of these algorithm even use
extra space.
{Timsort} (by Tim Peters; used in {Java} for
sorting non-elementary object sequences) was less performant on simple data types.

There are fast algorithms that exploit the set of keys to be sorted
(like {CountingSort} or {Radixsort}), but we aim at a general 
algorithm.

We also experimented with Sanders and Winkel's
{SuperScalarSampleSort} that has a particular memory
profile~\cite{SandersW04,EdelkampW16_BQS}.  
The main reason not to include the
results was that it allocates substantial amounts of space for
the elements and, thus, is not internal.
We experienced that it acts fast on random data,
but not as good on presorted inputs.
%through its dynamic element allocation scheme, we had
%problems in measuring moves with (\texttt{std::move}).

One remaining competitor was {(Bottom-Up) Mergesort}
(\texttt{std::stable\_sort}) in the {C++} STL library, which on some
inputs shows a very good performance. As this is an external
algorithm, we chose a tuned version of in-place {Mergesort}
(\texttt{stl::inplace\_stable\_sort} simply was too slow) called
{InSituMergesort} (ISMS)~\cite{ElmasryKS12} for our experiments.
%the library implementations of {Insertionsort}
%and {DualPivotQuicksort}.

According
to~\cite{dqsanalysis1,dqsanalysis2,dqsanalysis3}, for the
{DualPivotQuicksort} algorithm variants, there was no
clear-cut winner, but the experiments suggested that the
standard ones had a slight edge.  For {DualPivotQuicksort}
we translated the most recent Oracle's (Java) version (the
  algorithm selects the 2nd and 4th element of the inner five pivot
  candidates of a split-into-7).
%  We took the
%  most recent one we could get access to.
  As the full sorting
  algorithm is lengthy and contains many checks for special input
  types (with different code fragments and parameter settings for
  sorting arrays of bytes, shorts, ints, floats, doubles etc.) we
  extracted the integer part.

{TunedQuicksort}~\cite{ElmasryKS12} is an engineered implementation of
{CleverQuicksort}, probably unnoticed by the public and contained in a
paper on tuning {Mergesort} for studying branch misprediction as
in~\cite{Kaligosi}. It applies Lomuto's uni-directional Median-of-3
partitioner~\cite{CLRS09}, which works well for permutations and a
limited number of duplicates in the element set. As with {Introsort},
the algorithm stops recursion, if less than a fixed number of elements
are reached (16 in our case). These elements are then sorted together,
calling STL's {Insertionsort} algorithm. The implementation utilizes a
stack to avoid recursion, being responsible for tracking the remaining
array intervals to be processed. 
We dropped TunedQuicksort from the experiments
as it failed on presorted data and data with duplicates,
but we used parts of its efficient stack-based
implementation. This advanced CleverQuicksort implementation
and {CleverQuickMergesort} (QMS) are the two
extremes, while {CleverQuickMergeCleverQuicksort} 
(QuickMergeCleverQuicksort with a modified TunedQuicksort
implementation at $\sqrt{n}$
elements) (QMQS for short) is our tested intermediate.

QMS uses
hard-coded base cases for $n<10$, while the recursion stopper
in QMQS does
not.  Depending on the size of the arrays the
displayed numbers are averages over multiple runs
(repeats)\footnote{Experiments were run on one core of an Intel Core
  i5-2520M CPU (2.50GHz, 3MB Cache) with 16GB RAM; Operating system:
  Ubuntu Linux 64bit; Compiler: GNU's \texttt{g++} (4.8.2); optimized
  with flags \texttt{-O3 -march=native -funroll-loops}.  }.  The
arrays we sorted were random permutations of $\{1,\ldots,n\}$.  The
number of element comparisons was measured by increasing a counter for
every comparison.

For CPU time experiments we used vectors of integers as this is often most challenging for algorithms
with a lower number of comparison. All algorithms
sort the same arrays.  As counting the number of comparisons affects
the speed of the sorting algorithms, for further measurements (e.g., moves
and comparisons) we started another sets of experiments. 

%We considered inputs of moderate and
%larger size $n$, mainly because as many algorithms use
%{InsertionSort} and {Quicksort} for smaller inputs and thus do
%not show much difference.
We made element comparisons more expensive 
(we experimented with logarithms, and elements as vectors and records). 
Through a lower number of comparisons results were even better.

As a first empirical observation, for {Introsort} (Std) the number of
element comparisons divided by $n \log n$ is larger than $1.18$, due
to higher lower-order terms.  As theoretically shown, for QMS the
number of element comparisons divided by $n \log n$ was below~1.

For our QuickMergesort implementations we used the block partitionioner from \cite{EdelkampW16_BQS}, which improves the performance considerably over the standard Hoare partitioner.
Figs.~\ref{fig:sortrandom}--\ref{fig:sortrandomdup} show the results 
when sorting
random integer data (with QMQS: CleverQuickMergeCleverQuicksort, QMS:
CleverQuickMergesort,
MOMQMS: worst-case-efficent QuickMergesort,
HQMS: hybrid of worst-case- and average-case-efficient QuickMergeSort,
ISMS: InSituMergesort,
Java: DualPivotQuicksort,
and Std: \texttt{std::sort}). 
Times displayed are the
total running times divided by the number of elements (in ns).  We 
see that QuickMergeSort variants are fast.
For measuring element moves (assignments of input
data elements, e.g., a swap of two elements is counted as three
moves).

\begin{figure}[tb]
\begin{center}
\includegraphics[width=6.8cm]{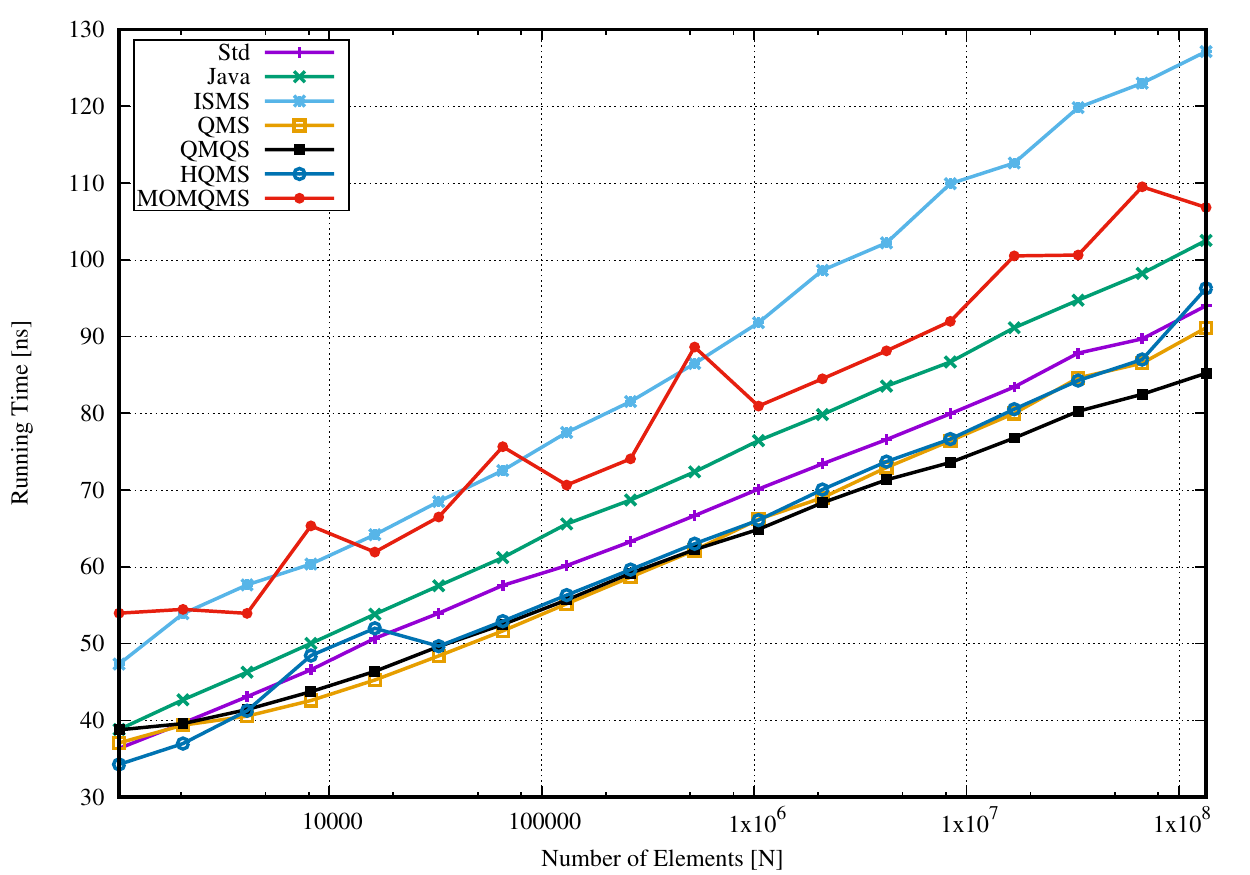}
\includegraphics[width=6.8cm]{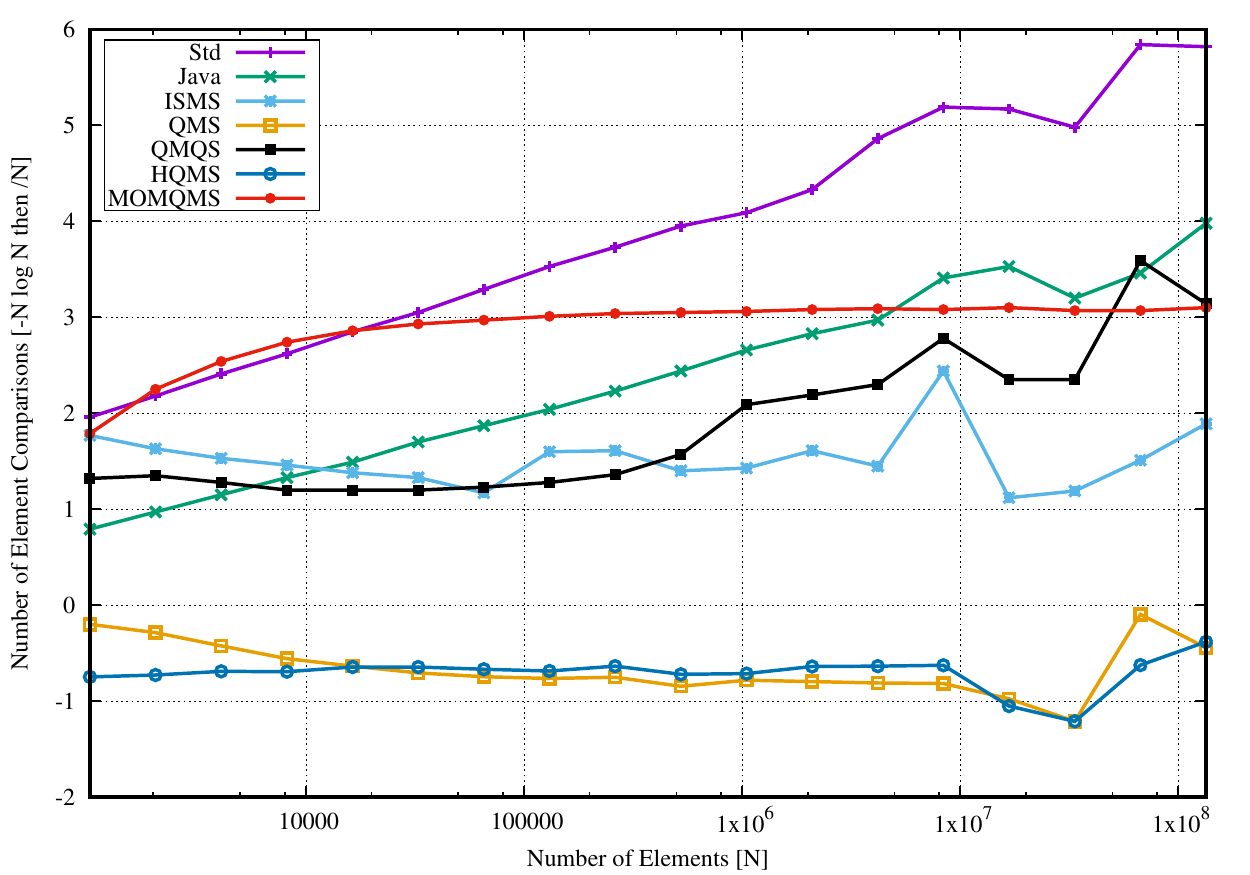}
\end{center}
\caption{Time (left) and element comparisons (right) for sorting
  random integer data.
}
\label{fig:sortrandom}
\end{figure}

\begin{figure}[ht]
\begin{center}

  \includegraphics[width=6.8cm]{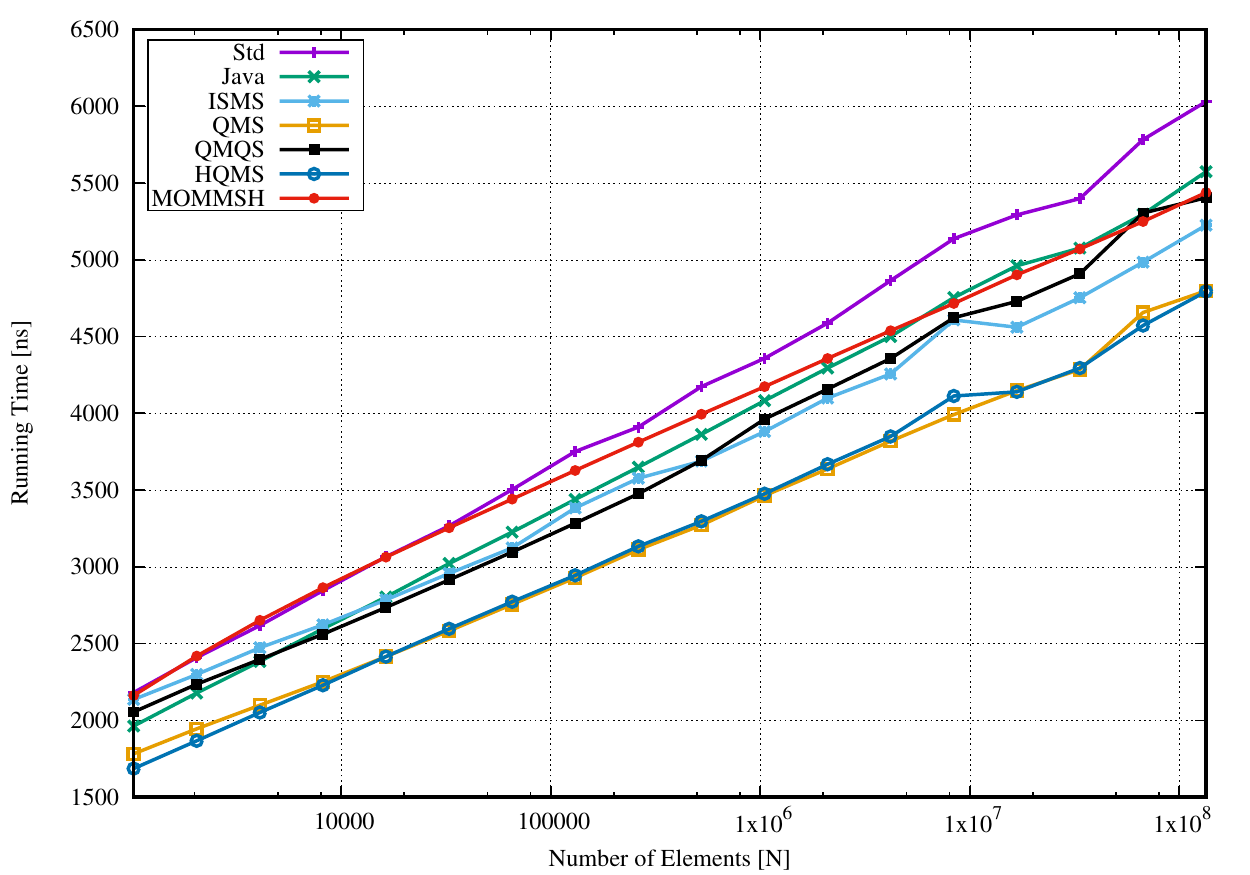}
		\includegraphics[width=6.8cm]{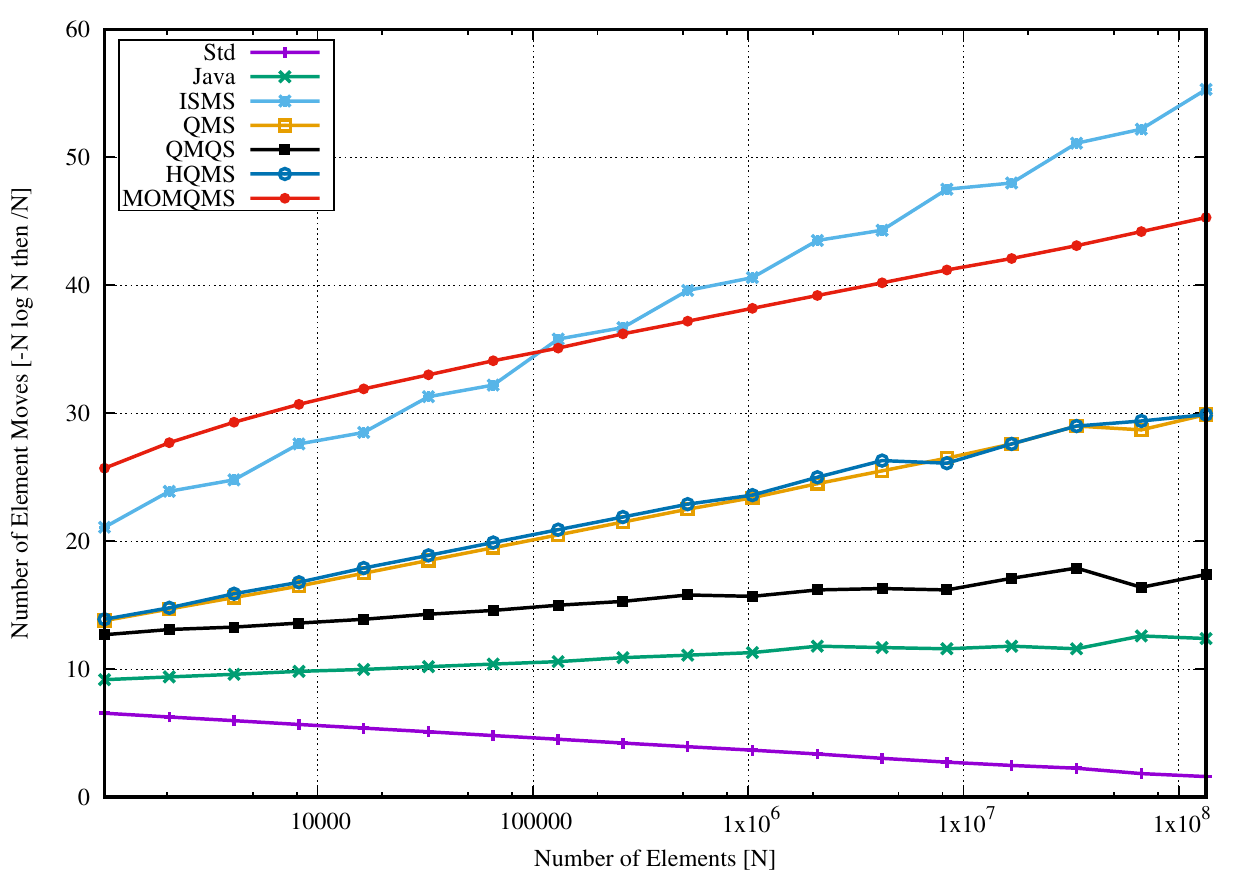}

\end{center}
\caption{Time for sorting random data with a comparator that applies the
  logarithm to the integer elements (left), and number of element
  moves (right).
}
\label{fig:sortrandomdup}
\end{figure}

\section{Conclusion}

Sorting $n$ elements is one of the most frequently studied subjects in
computer science with applications in almost all areas in which
efficient programs run.

With variants of {QuickMergesort}, we contributed sorting algorithms
which are able to run faster than {Introsort} and {DualPivotQuicksort} even for
elementary data.  Compared to Introsort, we reduced the leading term
$\alpha$ in $\alpha \cdot n \log n +\Oh(n)$ in the average number of
comparisons from $\alpha \approx 1.18$ via $1.09$ to finally reaching
1. The algorithms are simple but effective: a) Median-of-3 pivot
selection (as opposed to using a sample of $\sqrt{n}$), b)
faster sorting for smaller element sets. %, i.e.,
%once the recursion passes a certain threshold.
Both modifications show empirical impact and are analyzed
theoretically to provide upper bounds on the average number of
comparisons.
We discussed options to warrant a constant-factor
optimal worst-case.

In the theoretical part of our work we concentrated on average-case
analyses, as we strongly believe that this reflects realistic behavior
more closely than worst-case analyses.
With very low overhead, {QuickMergesort} has implemented in
a way that it becomes constant-factor optimal in the worst-case, too.
We chose efficient deterministic median-of-median strategies that are
also of interest for further considerations.

For future research we propose 
the integration of
QuickMergesort with multi-way merging, envisioning
to scale the algorithm
beyond main memory capacity and effective parallelizations.


\begin{thebibliography}{10}
	
	\bibitem{Alexandrescu17}
	Andrei Alexandrescu.
	\newblock Fast deterministic selection.
	\newblock In {\em 16th International Symposium on Experimental Algorithms,
		{SEA} 2017, June 21-23, 2017, London, {UK}}, pages 24:1--24:19, 2017.
	
	\bibitem{dqsanalysis1}
	Martin Aum{\"{u}}ller and Martin Dietzfelbinger.
	\newblock Optimal partitioning for dual pivot quicksort - (extended abstract).
	\newblock In {\em {ICALP}}, pages 33--44, 2013.
	
	\bibitem{dqsanalysis3}
	Martin Aum{\"{u}}ller, Martin Dietzfelbinger, and Pascal Klaue.
	\newblock How good is multi-pivot quicksort?
	\newblock {\em CoRR}, abs/1510.04676, 2015.
	
	\bibitem{BFPRT73}
	Manuel Blum, Robert~W. Floyd, Vaughan~R. Pratt, Ronald~L. Rivest, and Robert~E.
	Tarjan.
	\newblock Time bounds for selection.
	\newblock {\em Journal of Computer and System Sciences}, 7(4):448--461, 1973.
	
	\bibitem{quickheap}
	D.~Cantone and G.~Cinotti.
	\newblock Quick{H}eapsort, an efficient mix of classical sorting algorithms.
	\newblock {\em Theoretical Computer Science}, 285(1):25--42, 2002.
	
	\bibitem{CLRS09}
	Thomas~H. Cormen, Charles~E. Leiserson, Ronald~L. Rivest, and Clifford Stein.
	\newblock {\em Introduction to Algorithms}.
	\newblock The MIT Press, 3th edition, 2009.
	
	\bibitem{DiekertW13Quick}
	Volker Diekert and Armin Wei{\ss}.
	\newblock Quickheapsort: Modifications and improved analysis.
	\newblock In {\em CSR}, pages 24--35, 2013.
	
	\bibitem{Dut93}
	Ronald~D. Dutton.
	\newblock Weak-heap sort.
	\newblock {\em BIT}, 33(3):372--381, 1993.
	
	\bibitem{EWCSR}
	Stefan Edelkamp and Armin Wei{\ss}.
	\newblock {QuickXsort}: Efficient sorting with n logn - 1.399n + o(n)
	comparisons on average.
	\newblock In {\em {CSR}}, pages 139--152, 2014.
	
	\bibitem{EdelkampW16_BQS}
	Stefan Edelkamp and Armin Wei{\ss}.
	\newblock Blockquicksort: Avoiding branch mispredictions in {Q}uicksort.
	\newblock In Piotr Sankowski and Christos~D. Zaroliagis, editors, {\em 24th
		Annual European Symposium on Algorithms, {ESA} 2016, August 22-24, 2016,
		Aarhus, Denmark}, volume~57 of {\em LIPIcs}, pages 38:1--38:16. Schloss
	Dagstuhl - Leibniz-Zentrum f{\"u}r Informatik, 2016.
	
	\bibitem{ElmasryKS12}
	Amr Elmasry, Jyrki Katajainen, and Max Stenmark.
	\newblock Branch mispredictions don't affect mergesort.
	\newblock In {\em SEA}, pages 160--171, 2012.
	
	\bibitem{Flo64}
	Robert~W. Floyd.
	\newblock Algorithm 245: {T}reesort 3.
	\newblock {\em Comm. of the ACM}, 7(12):701, 1964.
	
	\bibitem{FordJ59}
	Jr. Ford, Lester~R. and Selmer~M. Johnson.
	\newblock A tournament problem.
	\newblock {\em The American Mathematical Monthly}, 66(5):pp. 387--389, 1959.
	
	\bibitem{Hoare61_find}
	C.~A.~R. Hoare.
	\newblock Algorithm 65: Find.
	\newblock {\em Commun. ACM}, 4(7):321--322, July 1961.
	
	\bibitem{Hoa62}
	Charles A.~R. Hoare.
	\newblock Quicksort.
	\newblock {\em The Computer Journal}, 5(1):10--16, 1962.
	
	\bibitem{Kaligosi}
	Kanela Kaligosi and Peter Sanders.
	\newblock How branch mispredictions affect quicksort.
	\newblock In {\em ESA}, pages 780--791, 2006.
	
	\bibitem{ultimate}
	Jyrki Katajainen.
	\newblock The ultimate heapsort.
	\newblock In {\em Computing: The Fourth Australasian Theory Symposium (CATS)},
	pages 87--96, 1998.
	
	\bibitem{KatajainenPT96}
	Jyrki Katajainen, Tomi Pasanen, and Jukka Teuhola.
	\newblock Practical in-place mergesort.
	\newblock {\em Nord. J. Comput.}, 3(1):27--40, 1996.
	
	\bibitem{Knu73}
	Donald~E. Knuth.
	\newblock {\em Sorting and Searching}, volume~3 of {\em The Art of Computer
		Programming}.
	\newblock Addison Wesley Longman, 2nd edition, 1998.
	
	\bibitem{KushagraLQM14}
	Shrinu Kushagra, Alejandro L{\'{o}}pez{-}Ortiz, Aurick Qiao, and J.~Ian Munro.
	\newblock Multi-pivot quicksort: Theory and experiments.
	\newblock In {\em {ALENEX}}, pages 47--60, 2014.
	
	\bibitem{MartinezPV10}
	Conrado Mart{\'{\i}}nez, Daniel Panario, and Alfredo Viola.
	\newblock Adaptive sampling strategies for quickselects.
	\newblock {\em {ACM} Trans. Algorithms}, 6(3):53:1--53:45, 2010.
	
	\bibitem{MartinezR01}
	Conrado Mart\'{\i}nez and Salvador Roura.
	\newblock Optimal {S}ampling {S}trategies in {Quicksort} and {Quickselect}.
	\newblock {\em SIAM J. Comput.}, 31(3):683--705, 2001.
	
	\bibitem{Mus97}
	David~R. Musser.
	\newblock Introspective sorting and selection algorithms.
	\newblock {\em Software---Practice and Experience}, 27(8):983--993, 1997.
	
	\bibitem{SandersW04}
	Peter Sanders and Sebastian Winkel.
	\newblock Super scalar sample sort.
	\newblock In {\em ESA}, pages 784--796, 2004.
	
	\bibitem{Wegener:MDR}
	Ingo Wegener.
	\newblock The worst case complexity of {McDiarmid and Reed's} variant of
	bottom-up-heap sort is less than $n \log n + 1.1n$.
	\newblock In {\em STACS}, pages 137--147, 1991.
	
	\bibitem{Weg93}
	Ingo Wegener.
	\newblock {B}ottom-up-{H}eapsort, a new variant of {H}eapsort beating, on an
	average, {Q}uicksort (if $n$ is not very small).
	\newblock {\em Theoretical Computer Science}, 118:81--98, 1993.
	
	\bibitem{dqsanalysis2}
	Sebastian Wild, Markus~E. Nebel, and Ralph Neininger.
	\newblock Average case and distributional analysis of dual-pivot quicksort.
	\newblock {\em {ACM} Transactions on Algorithms}, 11(3):22:1--22:42, 2015.
	
	\bibitem{Wil64}
	J.~W.~J. Williams.
	\newblock Algorithm 232: {H}eapsort.
	\newblock {\em Communications of the ACM}, 7(6):347--348, 1964.
	
\end{thebibliography}
\end{document}